\newtheorem{theorem}{Theorem}
\newtheorem{lemma}{Lemma}
\newclass{\QPTAS}{QPTAS}
\newcommand*\samethanks[1][\value{footnote}]{\footnotemark[#1]}
\newcommand{\ie}{i.\,e.}
\newcommand{\eg}{e.\,g.}
\newcommand{\Cc}{\mathcal{C}}
\newcommand{\Dc}{\mathcal{D}}
\newcommand{\Bc}{\mathcal{B}}
\newcommand{\Rc}{\mathcal{R}}
\newcommand{\paraspace}[1]{\paragraph{#1~~}}
\titlespacing{\paragraph}{0pt}{4pt plus 4pt minus 2pt}{0pt plus 2pt minus 0pt}
\renewcommand{\epsilon}{\varepsilon}
\newcommand\TombStone{\rule{.7ex}{1.7ex}}
\renewcommand{\qedsymbol}{\TombStone}
\newcommand{\qedd}{\let\qed\relax\quad\raisebox{-.1ex}{$\qedsymbol$}}
\par\vspace{4mm}}
\title{Maximum Scatter TSP in Doubling Metrics} 
\author{L\'{a}szl\'{o} Kozma\thanks{Department of Computer Science, 
Saarland University,\newline \indent {\tt \  kozma@cs.uni-saarland.de}, \ \ {\tt moemke@cs.uni-saarland.de}}
        \and
        Tobias M\"{o}mke\samethanks~\,\footnote{Funded by Deutsche Forschungsgemeinschaft grant BL511/10-1 and MO 2889/1-1.}
        }
\begin{document}
\maketitle

\begin{abstract}

We study the problem of finding a tour of $n$ points in which \emph{every edge} is \emph{long}. 
More precisely, we wish to find a tour that visits every point exactly once, maximizing the length of the shortest edge in the tour. 
The problem is known as Maximum Scatter TSP, and was introduced by Arkin et al.~(SODA 1997), motivated by applications in manufacturing and medical imaging. 
Arkin et al.~gave a $0.5$-approximation for the metric version of the problem and showed that this is the best possible ratio achievable in polynomial time (assuming $\P \neq \NP$). 
Arkin et al.\ raised the question of whether a better approximation ratio can be obtained in the Euclidean plane. 

We answer this question in the affirmative in a more general setting, by giving 
a $(1-\epsilon)$-approximation algorithm for $d$-dimensional doubling metrics, with running time $\tilde{O}\big(n^3 + 2^{O(K \log K)}\big)$, where $K \leq \left( \frac{13}{\epsilon} \right)^d$.
As a corollary we obtain (i) an efficient polynomial-time approximation scheme ($\EPTAS$) for all constant dimensions $d$, (ii) a polynomial-time approximation scheme ($\PTAS$) for dimension $d = \log\log{n}/c$, for a sufficiently large constant $c$, and (iii) a $\PTAS$ for constant $d$ and $\epsilon = \Omega(1/\log\log{n})$. Furthermore, we show the dependence on $d$ in our approximation scheme to be essentially optimal, unless Satisfiability can be solved in subexponential time. 

\end{abstract}

\section{Introduction}

Let $P = \{p_1, \dots, p_n\}$ be a set of points in some metric space with distance function $d(\cdot)$. A \emph{tour} $T$ of $P$ is a sequence $T = (p_{i_1}, \dots, p_{i_n})$, where $\{i_1,\dots,i_n\} = \{1,\dots,n\}$. The \mbox{\emph{scatter}} of $T$ is the minimum distance between neighboring points, \ie, 
\vspace{-0.08in}
$$\min \left\{d(p_{i_1}, p_{i_2}), \dots, d(p_{i_{n-1}}, p_{i_n}), d(p_{i_n}, p_{i_1})\right\}.$$ \vspace{-0.25in}

The Maximum Scatter Traveling Salesman Problem (MSTSP) asks for a tour of $P$ with maximum scatter. We study this problem in the \emph{geometric} setting where the metric $d(\cdot)$ has specific properties, for instance, when it is the Euclidean distance between points. 

Arkin, Chiang, Mitchell, Skiena, and Yang~\cite{Arkin} initiated the study of MSTSP in 1997, motivated by problems in manufacturing (riveting) and medical imaging. Similarly to other variants of TSP, solving MSTSP exactly is $\NP$-hard. The best we can hope for are \emph{approximations}, \ie, algorithms that compute a tour whose scatter is at least a factor $c$ of the optimum, where $0 < c <1$. We call such an algorithm (as well as its output) a $c$-approximation. We are mostly concerned with algorithms that compute a solution that is at least a factor $1 - \epsilon$ of the optimum, for an arbitrary constant $\epsilon > 0$. Such an algorithm is called a polynomial-time approximation scheme ($\PTAS$), if its running time is polynomial in $n$ for every fixed value of $\epsilon$. An \emph{efficient} polynomial-time approximation scheme ($\EPTAS$) is a $\PTAS$ that has running time $f(\epsilon) \cdot n^k$ for some constant $k$ (not depending on $\epsilon$), and an arbitrary function $f$.

For the metric MSTSP (\ie, the case in which the distance function $d(\cdot)$ is a metric, with no further restrictions), Arkin et al.\ give a $0.5$-approximation. They also show that for this variant of the problem, the ratio of $0.5$ is the best possible in polynomial time (assuming $\P \neq \NP$). They leave open the question of whether a better approximation ratio can be obtained if the problem has geometric structure (\eg, if the distances are Euclidean). Arkin et al.\ raise this question for the two-dimensional case (see also~\cite{TOPP} and \cite[p.\ 681]{Handbook}). 

\paraspace{MSTSP and other TSP variants.} TSP is one of the cornerstones of combinatorial optimization and several variants have been considered in the literature (we refer to~\cite{Gutin, Lawler} for surveys). Minimizing variants are more common, but there exist natural settings in which tours with long edges are desirable. This is the case in certain manufacturing operations where nearby elements in a sequence are required to be geometrically well-separated in order to avoid interferences~\cite{Arkin}. Informally, the difficulty of TSP variants with maximized objective functions lies in the fact that the solutions are highly non-local, making it hard to decompose the problem using standard techniques. (This is true both for algorithms and for hardness-proofs.)

In a certain sense, it is natural to expect that geometric structure should lead to stronger approximation-guarantees for MSTSP. 
The same phenomenon has been observed in case of the standard TSP. 
For metric TSP the best known approximation ratio is $1.5$ (Christofides~\cite{Chr76}), with a current lower bound of $\frac{123}{122}$ (Karpinski et al.~\cite{Karpinski}), whereas for Euclidean TSP, Arora~\cite{arora} and Mitchell~\cite{Mitchell96} independently found a $\PTAS$. 
Similarly, Euclidean MaxTSP (where the goal is to maximize the \emph{total} length of the tour) admits a $\PTAS$ (Barvinok~\cite{barvinok}), but the metric version is known to be $\MaxSNP$-hard~\cite{Yannakakis, barvinok}, ruling out the existence of a $\PTAS$ (assuming $\P \neq \NP$).
On the positive side, metric MaxTSP is currently known to admit a $\frac{7}{8}$-approximation (Kowalik and Mucha~\cite{Kowalik}).

The situation, however, is very different in the case of Bottleneck TSP (a.k.a.\ $\min\max$\ TSP), which is the TSP variant perhaps most similar to MSTSP (a.k.a.\ $\max\min$\ TSP). For metric Bottleneck TSP, a $2$-approximation can be achieved in polynomial time, and this ratio is the best possible, assuming $\P \neq \NP$ (Doroshko and Sarvanov~\cite{Doroshko}, and independently Parker and Rardin~\cite{Parker}). 
Surprisingly, for Bottleneck TSP, geometric structure does \emph{not} help. As shown by Sarvanov\footnote{The result of Sarvanov is available as a technical report in Russian, and appears to be not widely known in the English-language literature (the only reference we are aware of is~\cite{Larusic} and various sources incorrectly mention the problem as still open). We thank Victor Lepin for providing us with a copy of Sarvanov's paper. 
}\cite{Sarvanov} in 1995, the factor of $2$ can not be improved even in the two-dimensional Euclidean case. Based on these results for related problems, it is a priori not clear whether Euclidean MSTSP should admit good approximations.

\paraspace{Results and techniques.}

We answer the open question about Euclidean MSTSP, by giving an efficient polynomial-time approximation scheme ($\EPTAS$) for arbitrary fixed-dimensional Euclidean spaces. 
Since MSTSP is known to be strongly $\NP$-hard in dimensions $3$ and above~\cite{Fekete, BFJ}, our result settles the classical complexity status of the problem in these dimensions. 

If the dimension $d$ is superconstant (\ie, depending on $n$), we still obtain a $\PTAS$ if the dependence on $n$ is mild, \ie, $d = \log\log{n}/c$ for a sufficiently large constant $c$. 
Alternatively, for constant $d$ we may choose $\epsilon$ to be asymptotically smaller than 1 depending on $n$, \eg, as $\epsilon = \Theta(1/\log\log{n})$, and still obtain a $\PTAS$.

Our approximation scheme carries over to the more general case of \emph{doubling metrics}. The \emph{doubling dimension} of a metric space $\Dc$ with metric $d(\cdot)$ is the smallest $k$ such that every ball of radius $r$ in $\Dc$ can be covered by $2^k$ balls of radius $r/2$. If $k$ is finite, we also refer to $\Dc$ and $d(\cdot)$ as a \emph{doubling space}, resp.\ \emph{doubling metric}. Metric spaces with doubling dimension $O(d)$ generalize the Euclidean space with $d$ dimensions, and are significantly more general (see \eg, \cite{Gupta, Bartal, Clarkson1999, Talwar} and references therein). 

Arguments for Euclidean spaces do not always extend to doubling spaces. For instance, in the case of the standard TSP, the question of whether a $\PTAS$ exists in doubling spaces has been open for many years~\cite{Talwar, Bartal}, and was settled only recently by Bartal, Gottlieb and Krauthgamer~\cite{Bartal}. 

Our main result is the following. 
\begin{theorem}[$\EPTAS$]\label{thm2}
Let $P$ be a set of $n$ points in a metric space with doubling dimension $d$. A tour of $P$ whose scatter is at least a $(1-\epsilon)$ factor of the MSTSP optimum can be found in time $O(n^3 \log{n} + 2^{O(K \log K)}\log{n} + K^3 \log^2{n})$, where $K \leq \left( \frac{13}{\epsilon} \right)^d$. 
\end{theorem}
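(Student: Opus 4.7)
My plan is to reduce MSTSP to a decision problem via binary search on the scatter value $L$, then for each target $L$ cluster the points into groups of diameter $O(\epsilon L)$ using a net, characterize valid tours by their cluster-visit sequences, and exploit a doubling bound on local cluster counts to enumerate over a bounded number of local configurations.

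I would first run the $0.5$-approximation of Arkin et al.\ to obtain $L_0 \in [L^*/2, L^*]$ in polynomial time, so the optimum lies in $[L_0, 2L_0]$. Binary searching over the $O(n^2)$ pairwise distances in this interval uses $O(\log n)$ decision queries. For each query with target $L$, I would construct a greedy $\delta$-net $N$ with $\delta = \Theta(\epsilon L)$ and partition $P$ into clusters $\{C_v\}_{v \in N}$ by nearest-net-point assignment; each cluster then has diameter $O(\delta)$. The crucial structural link is: (a) any tour of scatter $\geq L$ induces a cluster sequence whose consecutive representatives are at distance $\geq L - O(\delta)$; (b) conversely, any cluster sequence with this property, in which each $C_v$ appears $|C_v|$ times, is realized by a tour of scatter $\geq L - O(\delta)$ by ordering the points of each cluster arbitrarily. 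A loss of $O(\delta) = O(\epsilon L)$ then yields the $(1-\epsilon)$-approximation.

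The doubling property enters through a packing bound: for every net point $v$, a ball of radius $L$ around $v$ is covered by $K = (13/\epsilon)^d$ balls of radius $\delta/2$, each containing at most one net point. Hence the ``conflict graph'' on $N$ (pairs at distance $< L$) has maximum degree at most $K$. I would enumerate over local templates describing arrangements of up to $K$ cluster labels that could appear in a contiguous window of the tour; there are at most $K^K = 2^{O(K \log K)}$ such templates. For each template I would verify feasibility of extension to a full cluster sequence via a flow- or matching-based computation in $O(K^3 \log n)$ time. Summed over $O(\log n)$ binary-search iterations, this matches the claimed running time $O(n^3 \log n + 2^{O(K \log K)} \log n + K^3 \log^2 n)$; the $n^3$ comes from initial net construction and distance bookkeeping.

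The main obstacle is the last step: proving that a bounded-size local template, together with the global counts $|C_v|$, suffices to characterize valid cluster sequences. Essentially I would need a decoupling lemma asserting that far-apart tour segments interact only through an interface of $O(K)$ clusters; the doubling property should yield this locality, but a careful ``rotation'' or insertion argument will be needed to reassemble partial local solutions into a global tour without degrading the scatter by more than $O(\epsilon L)$. Controlling the singly-exponential dependence on $K$ while keeping the polynomial dependence on $n$ is the central technical difficulty, and is where the full power of the doubling hypothesis is expected to be used.
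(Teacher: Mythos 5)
There is a genuine gap, and it is exactly the one you flag at the end: your scheme nets the \emph{entire} point set, so the number of clusters is $\Theta(n)$ rather than $K$, and a Hamiltonian tour is a global constraint on that cluster multigraph. Your proposed fix --- enumerating $2^{O(K\log K)}$ ``local templates'' of $\le K$ cluster labels and gluing them via an unproven decoupling/rotation lemma --- is precisely the missing content, and it is not clear it can be made to work: the fact that the conflict graph on the net has maximum degree $K$ does not by itself let you summarize the interaction between distant tour segments by a bounded interface, since which clusters still have unvisited points is global information. The paper avoids this entirely by a dichotomy that localizes the whole problem to a single ball of radius $O(\ell)$. First, if every point has at least $n/2$ points at distance $\ge \ell$, Dirac's theorem immediately yields a tour of scatter $\ge \ell$. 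Otherwise there is a ``low-degree'' point $p$ whose radius-$\ell$ ball contains more than $n/2$ points; all low-degree points then lie in a ball $B_2$ of radius $2\ell$ around $p$, and an edge-swap argument (the analogue of your structural claim, but anchored at $p$) shows an optimal tour can be assumed to have every edge touching $B_2$. Second, the Bondy--Chv\'atal closure theorem lets one freely add all edges between high-degree points, so every point outside the radius-$3\ell$ ball $B_3$ can be collapsed into a \emph{single} virtual vertex $q$ without affecting Hamiltonicity; this is the rigorous replacement for your decoupling lemma.

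After this localization, only $P\cap B_3$ is netted, giving at most $K\le(13/\epsilon)^d$ net points, and the remaining task --- find a closed walk on the $(K+1)$-vertex graph (net points plus $q$, edges between net points at distance $\ge\ell$, all edges to $q$) visiting each vertex a prescribed number of times --- is solved exactly by the many-visits TSP algorithm of Cosmadakis and Papadimitriou in time $O(n+K^3\log n+2^{O(K\log K)})$; this, not template enumeration, is the source of the $2^{O(K\log K)}$ term, while the $n^3$ term comes from the Bondy--Chv\'atal reconstruction of a genuine long-edge tour from the tour of the closed graph, and the $\log n$ factors from the binary search over candidate values of $\ell$. So while your net construction, packing bound, rounding loss $O(\epsilon\ell)$, and binary-search outer loop all match the paper, the core of the argument --- the Dirac/low-degree dichotomy, the swap lemma anchored at a low-degree point, the Bondy--Chv\'atal collapse of the exterior, and the reduction to many-visits TSP --- is absent from your proposal, and without it the enumeration step does not go through.
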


We show that the running time in Theorem~\ref{thm2} is essentially tight (apart from the involved constants and lower order factors). A result of Trevisan~\cite{Trevisan} shows that for $\Omega(\log{n})$-dimensional Euclidean spaces, TSP is $\APX$-hard, and thus admits no $\PTAS$, unless $\P = \NP$. Adapting the proof technique of Trevisan to MSTSP, we show that an approximation scheme for MSTSP with an improved dependence on $d$ would give a subexponential-time algorithm for TSP in bipartite cubic graphs, contradicting the exponential-time hypothesis (ETH). More precisely, we show the following. 

\begin{theorem}\label{thm:eth}
There is a constant $c$ such that a $(3/4)^{1/p}$-approximation algorithm with running time $2^{2^{o(d)}}$ for MSTSP in $\mathbb{R}^d$ for $d \ge c\log n$ and distances according to the $\ell_p$ norm contradicts the ETH.
\end{theorem}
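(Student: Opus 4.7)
The plan is to reduce from Hamiltonian cycle on bipartite cubic graphs, which by standard ETH-preserving reductions (3SAT $\to$ bounded-occurrence 3SAT $\to$ cubic bipartite HC) requires $2^{\Omega(n)}$ time under ETH. Given such a graph $G$ on $n$ vertices, I would build an $n$-point MSTSP instance in $\mathbb{R}^d$ with $d=O(\log n)$ such that the optimum scatter is at least $H^{1/p}$ when $G$ is Hamiltonian and at most $(3H/4)^{1/p}$ when it is not, for a suitable integer $H=H(n)$. A $(3/4)^{1/p}$-approximation running in $2^{2^{o(d)}}$ time would then, by distinguishing those two cases, decide cubic bipartite Hamiltonicity in $2^{2^{o(\log n)}}=2^{n^{o(1)}}=2^{o(n)}$ time, contradicting ETH.

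The heart of the reduction is an embedding $\phi\colon V(G)\to\{0,1\}^d$ for which the Hamming distance (equivalently, the $\ell_p^p$-distance) is exactly $H$ for every edge of $G$ and at most $3H/4$ for every non-edge. Following Trevisan's approach, I would use K\"onig's theorem to decompose $E(G)$ into three perfect matchings $M_1,M_2,M_3$ and split the $d$ coordinates into three ``matching blocks'' of length $m$ plus a ``side indicator'' of length $s$. In block $i$, each left vertex $u$ carries the codeword $c_{i,\alpha_i(u)}$ of an equidistant (or near-equidistant) binary code of relative distance $\delta\approx 1/2$---where $\alpha_i(u)$ is the index of the $M_i$-pair containing $u$---and each right vertex $v$ carries the bitwise complement $\overline{c_{i,\alpha_i(v)}}$. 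The side indicator is $0^s$ for $L$ and $1^s$ for $R$. With an equidistant code one verifies that cross-side matched pairs land at Hamming distance $m+2(1-\delta)m+s$, cross-side non-matched pairs at $3(1-\delta)m+s$, and same-side pairs at $3\delta m$; setting $s=(6\delta-3)m$ with $\delta$ near $1/2$ makes the first equal $H=4\delta m$ while the others equal $3H/4$.

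From this embedding, completeness is immediate: if $G$ is Hamiltonian, threading its cycle through $\phi(V)$ yields a tour whose every consecutive pair is a graph edge of $\ell_p$-length $H^{1/p}$, hence scatter $H^{1/p}$. For soundness, any tour on $n$ points has $n$ consecutive pairs; if $G$ has no Hamiltonian cycle, at least one such pair is a non-edge of $G$, whose $\ell_p$-distance is at most $(3H/4)^{1/p}$, capping the scatter. A $(3/4)^{1/p}$-approximation thus outputs scatter $\ge(3H/4)^{1/p}$ in the Hamiltonian case and $\le(3H/4)^{1/p}$ in the non-Hamiltonian case; since Hamming distances are integers, scaling $H$ so that $3H/4$ is strictly separated from all other achievable Hamming values renders the two cases distinguishable.

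The main obstacle is producing the four pair-type Hamming distances with the right $3/4$-gap while keeping $d=O(\log n)$. Exact equidistant binary codes need $m=\Theta(N)$ bits and would blow up $d$ to $\Theta(n)$, so the implementation has to use a constant-rate binary code (for instance a concatenated Reed--Solomon--Hadamard code) in which pairwise Hamming distances deviate from the nominal $\delta m$ only by an $o(m)$ amount. This slack shrinks the gap from exactly $(3/4)^{1/p}$ by an $o(1)$ factor, which can be re-absorbed either by tightening the target constant or by choosing $H$ so that a strict integer gap survives. With the embedding in place, $d=O(\log n)$ is maintained and the hypothesized MSTSP algorithm yields a $2^{n^{o(1)}}$-time algorithm for bipartite cubic Hamiltonicity, contradicting ETH as claimed.
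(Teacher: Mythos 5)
Your overall route is the same as the paper's: reduce from Hamiltonicity of bipartite cubic graphs (ETH-hard via linear-size reductions from 3-SAT), decompose the edge set into three perfect matchings (K\"onig / 3-edge-coloring), give each vertex one coordinate block per matching carrying a codeword of a nearly balanced binary code or its bitwise complement, so that graph edges land at Hamming distance about $2D$ and non-edges at about $\tfrac{3}{2}D$, and then feed the resulting $\ell_p$ instance with $d=O(\log n)$ to the hypothetical $2^{2^{o(d)}}$-time $(3/4)^{1/p}$-approximation to decide Hamiltonicity in $2^{o(n)}$ time. Your extra ``side indicator'' block is harmless (with $\delta=1/2$ it has length $0$ and your construction coincides with the paper's), and your completeness/soundness analysis matches the paper's.

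The genuine gap is in the one step that actually makes the theorem work: compressing the dimension to $O(\log n)$. A ``constant-rate binary code in which pairwise Hamming distances deviate from $\delta m$ only by $o(m)$'' does not exist: if all pairwise distances are at least $(\tfrac{1}{2}-\epsilon)m$, then by Plotkin/Elias--Bassalygo-type bounds the rate is $O(\epsilon^2)$, so a two-sided $o(m)$ deviation forces vanishing rate. Your concrete suggestion, Reed--Solomon concatenated with Hadamard, also fails quantitatively: its pairwise distances are (number of differing outer symbols)$\cdot q/2$, so to keep them within $(1\pm o(1))m/2$ you need outer dimension $k=o(n_o)$, and with field size $q\ge n_o$ and the constraint $q^k\ge n$ the block length is $n_o q=\tilde{\Omega}(\log^2 n)$, not $O(\log n)$. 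This is fatal for the ETH conclusion: with $d=\Theta(\log^2 n)$ a running time of $2^{2^{o(d)}}$ can be, say, $2^{2^{\log^{1.5}n}}=2^{n^{\log^{0.5}n}}$, which is nowhere near $2^{o(n)}$, so no contradiction follows. The paper avoids this by using the Naor--Naor $\epsilon$-balanced codes (with $\epsilon$ a small \emph{constant}), which give length $O(\log n)$ with all pairwise distances in $[(\tfrac12-\epsilon)k,(\tfrac12+\epsilon)k]$, and by exploiting that complementation only flips the sign of the error; the resulting $\pm\epsilon D$ slack is absorbed into the constant of the gap (note this also invalidates your ``strict integer gap'' workaround, since with approximate codes the distances are no longer exact and the separation is $3/4+O(\epsilon)$ rather than exactly $3/4$ --- a looseness the paper shares). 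So the fix is to replace your code by an explicit $\epsilon$-balanced code of logarithmic length; as written, your proof of the $d=O(\log n)$ claim does not go through.
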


The main ingredient of the algorithm is a rounding scheme, whereby we replace the original point set with points of a grid (or an $\epsilon$-net). 
As subproblems, we solve perfect matching and Eulerian tour instances, and we use the many-visits TSP algorithm of Cosmadakis and Papadimitriou~\cite{Cosmadakis}. 

In order to introduce some of the ideas and tools used for proving the result, we first describe a $\PTAS$ that is simpler than our main algorithm but has weaker guarantees. This result already settles the open question about MSTSP by using only elementary techniques. 
However, due to its prohibitive running time, it is only of theoretical (and perhaps didactic) interest. 

The easier $\PTAS$ is described in \textsection\,\ref{sec1}. The $\EPTAS$ (Theorem~\ref{thm2}) is given in \textsection\,\ref{sec2}. The hardness proof can be found in \textsection\,\ref{sec4}.

\paraspace{Further related work.}
Searching for approximation schemes for $\NP$-hard geometric optimization problems (or showing that they cannot exist) is a rich field of research. Such schemes, while often impractical to implement directly, have led both to new algorithmic techniques and to a better structural understanding of the geometric problems. A full survey of such works is out of scope here, we only mention a few representative or recent results~\cite{arora, Mitchell96, Friggstad, sunil-arya, Saurabh, sariel_book}.

\paraspace{Open question.}
Our current work does not address the complexity status of solving MSTSP in the Euclidean plane \emph{exactly}. It remains open whether this problem is $\NP$-hard (the situation is the same for MaxTSP). We note that this question has a natural equivalent formulation: Is checking for existence of a Hamiltonian cycle $\NP$-complete in complements of unit disk graphs?

\section{Warmup: a simple $\PTAS$}\label{sec1}

In this section we prove the following theorem. To keep the analysis simple, we do not optimize the involved constants, and we restrict attention to Euclidean spaces.

\begin{theorem}
Let $P$ be a set of $n$ points in $\mathbb{R}^d$. A tour of $P$ whose scatter is at least a $(1-\epsilon)$ factor of the MSTSP optimum can be found in time $O\left(n^{(75d/\epsilon^2)^d} \right)$.
\end{theorem}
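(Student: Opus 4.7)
My plan is a three-stage reduction: guess the optimum scatter, snap the points to a grid, and solve the discretised instance by dynamic programming.

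First, I would guess $L^\ast$ by enumerating the $\binom{n}{2}$ candidate values that arise as inter-point distances; the optimum must equal one of these, so trying each and returning the best output adds only an $O(n^2)$ overhead which is absorbed into the final running time. (A binary search would work too.)

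Next, I would overlay $\mathbb{R}^d$ with an axis-aligned grid of cell-side $s = \Theta(\epsilon L^\ast / \sqrt{d})$, so every cell has diameter at most $\epsilon L^\ast / 2$, and snap each input point to its cell centre. Since snapping perturbs each point by at most $\epsilon L^\ast / 4$, every tour-edge changes length by at most $\epsilon L^\ast / 2$; consequently an optimal tour on the original instance corresponds to a tour of scatter at least $(1-\epsilon) L^\ast$ on the snapped multi-set, and conversely. The problem is thereby reduced to finding a maximum-scatter tour of a multi-set of grid points.

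The crux is then a standard packing estimate: any ball of radius $2L^\ast$ in $\mathbb{R}^d$ contains at most $K := (75d/\epsilon^2)^d$ cells of the chosen grid, so from any cell the tour can proceed only to one of $K$ candidate cells. I would then exhibit a dynamic program that extends the tour one cell at a time, whose state consists of the current grid cell together with the residual multiplicities over the at most $K$ cells at distance $\leq 2 L^\ast$ from it. The state space has size $n^K$, each transition is $\mathrm{poly}(n, K)$, and a table-fill of total size $n \cdot n^K$ yields the stated $O(n^{(75d/\epsilon^2)^d})$ bound.

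The hard part will be controlling the DP state. Globally the tour may visit up to $n$ distinct cells, so a table indexed by residual multiplicities over \emph{all} used cells would have far too many states. The essential technical step is to argue that only \emph{locally reachable} multiplicities need be tracked at each position, while still guaranteeing that every input point is visited exactly once and the tour closes. A clean way to enforce global consistency is to enumerate partitions of the tour into local segments each involving only $K$ cells, run the local DP within each segment, and stitch segments together; one must verify that this enumeration is lossless and fits within the claimed running time. I expect the final $75$ inside the bound to come out of the constants hidden in the packing estimate and the snapping resolution, both of which I have left informal here.
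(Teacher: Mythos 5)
There is a genuine gap, and it sits exactly where you flagged ``the hard part.'' Your DP rests on the claim that ``from any cell the tour can proceed only to one of $K$ candidate cells,'' justified by a packing bound inside a ball of radius $2L^\ast$. But in Maximum Scatter TSP an edge only has a \emph{lower} bound on its length: the successor of a point may lie arbitrarily far away, so the next cell can be any of up to $\Theta(n)$ cells, not one of $K$ nearby ones. The packing estimate bounds the number of cells \emph{close} to the current cell, which is irrelevant here; locality of transitions is precisely what this maximization problem lacks. Consequently the state space you describe does not capture the problem, and the fallback you sketch---tracking only ``locally reachable'' multiplicities, or enumerating a partition of the tour into segments on $K$ cells each and stitching them---is not shown to be lossless, and there is no reason to expect it to be: the tour jumps globally across the point set, and closing it into a single connected Eulerian structure is a global constraint that segment-local tables do not see.

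What the paper uses instead, and what your argument is missing, is a structural exchange lemma: if $\ell$ is the optimum and $\{p,q\}$ is an optimal edge of length exactly $\ell$, then there is an optimal tour in which \emph{every} edge has at least one endpoint in $B_p \cup B_q$, the union of two balls of radius $\ell(1+\epsilon)$ (edges with both endpoints outside can be swapped with $\{p,q\}$ without decreasing the scatter). This confines the combinatorial core to $O((\sqrt{d}/\epsilon)^d)$ grid points inside $B_p \cup B_q$; one guesses, for each pair of such grid points, the multiplicity of tour edges mapped to it and the multiplicity of ``virtual'' edges standing for two-edge hops through an outside point. Outside points are then assigned to virtual edges by a bipartite perfect matching, the resulting multigraph is traversed by an Eulerian tour, and the tour of $P$ is recovered; the guessing over multiplicities is what produces the $n^{(75d/\epsilon^2)^d}$ bound. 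Your rounding step and the enumeration of candidate values for $L^\ast$ match the paper, but without the exchange lemma (or an equivalent device confining the structure to a bounded region) the discretization alone does not yield a bounded ``core,'' and the proposed DP does not go through.
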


Consider a set $P$ of $n$ points in $\mathbb{R}^d$, and a precision parameter $\epsilon>0$. Observe that the MSTSP optimum (\ie, the shortest edge length of a tour of $P$) can only take one of ${n \choose 2}$ possible values (the distances between points in $P$). We sort the distances in decreasing order, and for each distance $\ell$ we attempt to construct a tour with scatter $\ell$. Our procedure will either produce a certificate that no such tour exists (in which case we continue with the next value of $\ell$), or it succeeds in constructing a tour of $P$ with scatter at least $\ell(1-\epsilon)$.

Before proceeding to the algorithm, we present a simple structural observation upon which the algorithm relies. 
Suppose that $\ell$ is the maximum value such that a tour $T$ of $P$ exists with scatter $\ell$, \ie, $\ell$ is the MSTSP optimum. Let $p,q \in P$ be two neighboring points in $T$, \ie, the edge $\{p,q\}$ is part of $T$, such that $d(p,q) = \ell$. 
Let $B_p$ and $B_q$ denote the open balls of radius $\ell (1 + \epsilon)$, with centers $p$ and $q$, respectively. We show that the optimal solution can be assumed to have a certain structure in relation to $B_p$ and $B_q$.

\begin{lemma}
\label{lem3a}
Let $\ell$ be the maximum value such that a tour $T$ of $P$ exists with scatter $\ell$. Then there is a tour $T'$ of $P$ with scatter $\ell$ containing an edge $\{p,q\}$ with $d(p,q) = \ell$, such that for every edge $\{x,y\}$ of $T'$, at least one of $x$ and $y$ is contained in $B_p \cup B_q$.
\end{lemma}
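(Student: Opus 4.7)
My plan is to use an optimality-based exchange argument. Call an edge of a tour \emph{tight} if its length is exactly $\ell$. I would choose $T'$ to be any optimal tour (one with scatter $\ell$) minimizing the number of tight edges; since every optimal tour's shortest edge has length $\ell$, $T'$ must contain at least one tight edge, and I pick $\{p,q\}$ to be such an edge. These choices determine the balls $B_p$ and $B_q$ appearing in the statement.

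To verify the structural property, I assume for contradiction that $T'$ contains an edge $\{x,y\}$ with neither endpoint in $B_p \cup B_q$. By the definition of the balls, this gives $d(x,p), d(x,q), d(y,p), d(y,q) \geq \ell(1+\epsilon)$; in particular $x, y \notin \{p,q\}$, so the edges $\{p,q\}$ and $\{x,y\}$ are vertex-disjoint. I then perform a 2-opt exchange on these two edges: depending on the cyclic order of $p, q, x, y$ along $T'$, exactly one of the two reconnection patterns $\{\{p,x\},\{q,y\}\}$ or $\{\{p,y\},\{q,x\}\}$ preserves the Hamiltonian cycle structure, yielding a new tour $T''$.

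The essential point is that in either reconnection both replacement edges have length at least $\ell(1+\epsilon) > \ell$. Hence every edge of $T''$ has length at least $\ell$, so its scatter is at least $\ell$; by the optimality of $\ell$, it is exactly $\ell$ and $T''$ is itself an optimal tour. But the exchange removed the tight edge $\{p,q\}$ (and possibly also $\{x,y\}$, if it happened to be tight), while introducing only edges strictly longer than $\ell$. So $T''$ has strictly fewer tight edges than $T'$, contradicting the minimality in the choice of $T'$.

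I do not foresee a real obstacle. The crux is simply that a hypothetical bad edge $\{x,y\}$ outside $B_p \cup B_q$ automatically forms, together with the tight edge $\{p,q\}$, a configuration on which 2-opt strictly decreases the potential (number of tight edges) while keeping every edge long enough. Checking which of the two reconnections gives a single cycle rather than two disjoint cycles is standard 2-opt bookkeeping and is the only routine verification required.
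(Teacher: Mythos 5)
Your proof is correct and uses essentially the same argument as the paper: the identical 2-opt exchange on a tight edge $\{p,q\}$ and a violating edge $\{x,y\}$, with the number of length-$\ell$ edges as the potential and the maximality of $\ell$ ruling out a tour with no tight edge. The only difference is packaging—you take an optimal tour minimizing the number of tight edges and derive a one-step contradiction, whereas the paper iterates the exchange and argues termination via the strictly decreasing tight-edge count.
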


\begin{proof}
Let $T$ be a tour of $P$ with scatter $\ell$, let $\{p,q\}$ be an edge of $T$ of length $\ell$, and let $\{x,y\}$ be an edge of $T$ such that $x,y \notin B_p \cup B_q$. By the definition of $B_p$ and $B_q$ we have $d(x,p), d(x,q), d(y,p), d(y,q) > \ell$. Thus, we can replace the edges $\{x,y\}$ and $\{p,q\}$ in $T$, with either $\{x,p\}$ and $\{y,q\}$, or $\{x,q\}$ and $\{y,p\}$, depending on the ordering of the points in $T$. We obtain another tour with scatter at least $\ell$, in which the number of edges of length exactly $\ell$ has decreased. If the obtained tour contains no edge of length $\ell$, then its scatter is strictly larger than $\ell$, contradicting the choice of $\ell$. Otherwise, we repeat the argument with a remaining edge of length $\ell$. See Fig.~\ref{fig1a} for an illustration. \qedd
\end{proof}

\begin{figure}[tb]
\centering
\includegraphics[width=2in]{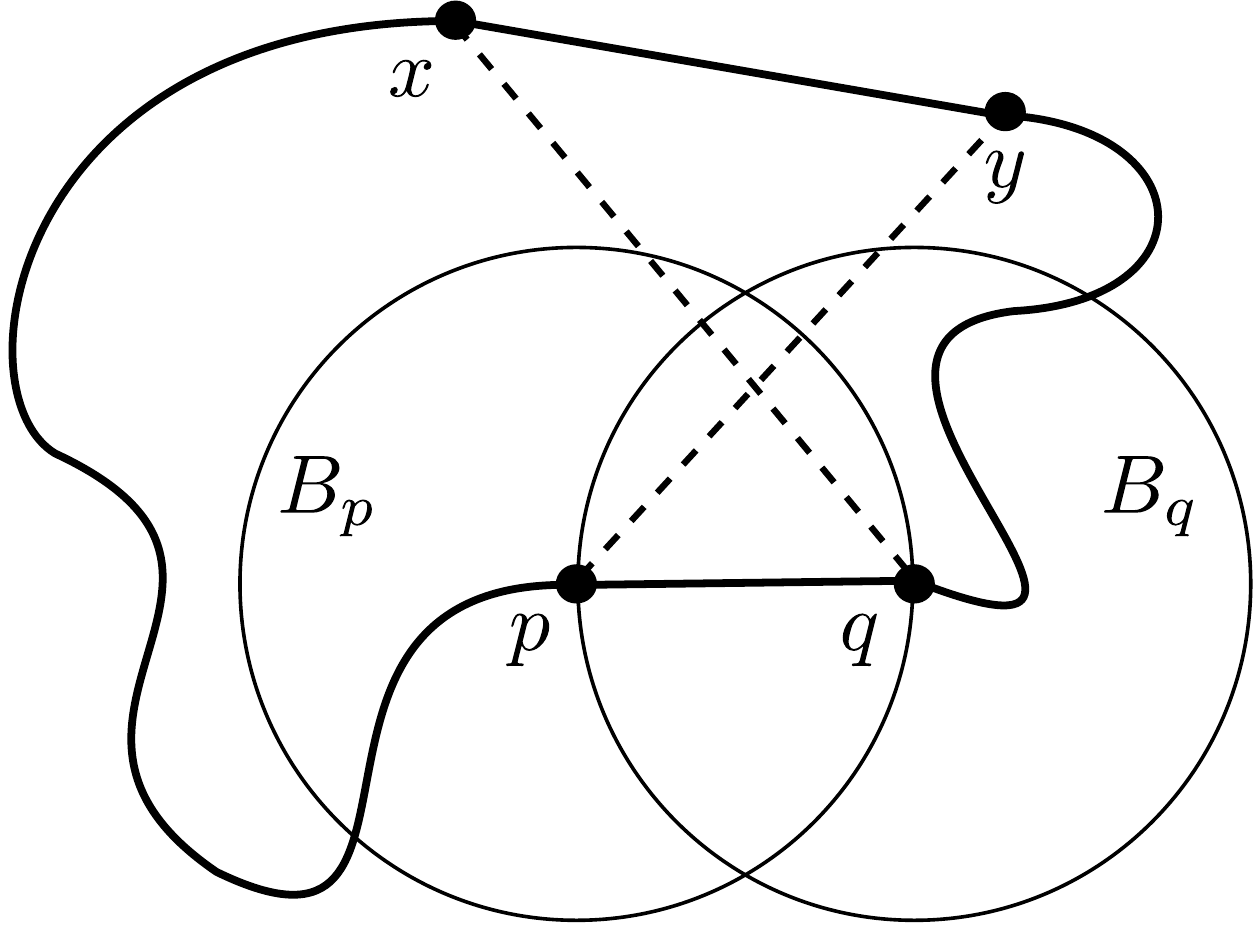}
\caption{Illustration of Lemma~\ref{lem3a}. The dashed edges can replace $\{x,y\}$ and $\{p,q\}$ in the optimal tour.}
\label{fig1a}
\end{figure}

The next ingredient of the algorithm is a coarsening of the input, by rounding points in $P$ to points of a grid. Let $\mathbb{G}_{\delta}$ be a $\delta$-scaling of the $d$-dimensional unit grid, \ie, $\mathbb{G}_{\delta} = \{ \delta (n_1, \dots, n_d) \mid n_1, \dots, n_d \in \mathbb{Z}\}$, for an arbitrary $\delta > 0$. Let $f_\delta(\cdot)$, or simply $f(\cdot)$, be the function from $\mathbb{R}^d$ to $\mathbb{G}_\delta$ that maps each point to its nearest grid point (breaking ties arbitrarily). The following properties result from basic geometric considerations.

\begin{lemma}
\label{lem4}
With $f(\cdot)$ and $\delta$ as defined earlier, we have:
\begin{compactenum}[(i)]
\item $d(x,y) \geq d(f(x),y) - \delta \sqrt{d}/2$ for all $x,y \in  \mathbb{R}^d$, 
\item $|B \cap \mathbb{G}_{\delta}| \leq (2 \ell / \delta + 1)^d$ for every open ball $B$ of radius $\ell$. 
\end{compactenum}
\end{lemma}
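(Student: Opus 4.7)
My plan is to treat the two parts separately, since each follows from a direct geometric observation about the grid $\mathbb{G}_\delta$.

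For part (i), the key fact I would first establish is that any point $x \in \mathbb{R}^d$ is at distance at most $\delta\sqrt{d}/2$ from its nearest grid point $f(x)$. This is because $x$ lies inside some axis-aligned cube of the grid (of side length $\delta$), and among the $2^d$ corners the closest is at distance at most half the space-diagonal of the cube, i.e.\ $\tfrac{1}{2}\sqrt{d \cdot \delta^2} = \delta\sqrt{d}/2$. Once this bound $d(x,f(x)) \le \delta\sqrt{d}/2$ is in hand, the claimed inequality is a one-line application of the triangle inequality: $d(f(x),y) \le d(f(x),x) + d(x,y) \le \delta\sqrt{d}/2 + d(x,y)$, which rearranges to the statement.

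For part (ii), I would enclose the open ball $B$ of radius $\ell$ inside the smallest axis-aligned bounding cube, which has side length $2\ell$ (and which, being closed, contains every grid point of $B$). It then suffices to count grid points of $\mathbb{G}_\delta$ in an axis-aligned cube of side $2\ell$. Along each of the $d$ coordinate axes, the grid points of $\mathbb{G}_\delta$ lying in an interval of length $2\ell$ number at most $\lfloor 2\ell/\delta\rfloor + 1 \le 2\ell/\delta + 1$. Since the grid is a product of one-dimensional grids, the total count is bounded by the product over the $d$ axes, giving $(2\ell/\delta + 1)^d$.

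Neither step presents a real obstacle: part (i) is just the diameter of a grid cell plus the triangle inequality, and part (ii) is a product bound on one-dimensional counts. The only minor subtlety is ensuring in (ii) that the upper bound also accommodates grid points on the boundary of the cube (which is why one enlarges to the closed bounding cube before counting), and in (i) that ties in the definition of $f$ are irrelevant because any tie-breaking choice yields a grid point within the same cell.
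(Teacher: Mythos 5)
Your argument is correct and is exactly the elementary reasoning the paper has in mind: the paper omits a proof of Lemma~\ref{lem4} entirely, stating only that it follows from ``basic geometric considerations,'' and your two steps (the $\delta\sqrt{d}/2$ bound on the rounding error plus the triangle inequality for (i), and the per-axis count of at most $2\ell/\delta+1$ grid points inside a bounding cube of side $2\ell$ for (ii)) are precisely those considerations, correctly carried out.
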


Given a point set $P \in \mathbb{R}^d$, let $G_P$ be a graph with vertex set $V(G_P) = P$ and edge set $E(G_P) = \left\{ \{x,y\} \mid x,y\in P \ \land \ d(x,y) \geq \ell \right\}$. In words, $G_P$ contains all edges with length at least $\ell$. We wish to find a Hamiltonian cycle in $G_P$ or show that there is none.

Observe that under $f(\cdot)$, the graph $G_P$ becomes a multigraph $H_P$ defined as follows: $V(H_P) = \{v \mid v = f(x), x \in P\}$, \ie, the set of grid points with at least one mapped point of $P$, and $E(H_P) = \left\{\{u,v\} \mid u = f(x), v = f(y), \{x,y\} \in E(G_P)\right\}$, \ie, the pairs of grid points to which edges of $G_P$ are mapped. We also maintain multiplicities on edges of $H_P$, \ie, we keep track of how many edges of $G_P$ are mapped to each edge of $H_P$.

Consider a tour $T$ of $P$ of scatter at least $\ell$, \ie, a Hamiltonian cycle of $G_P$. Under $f(\cdot)$, the edges of $T$ form a spanning subgraph $H'_P$ of $H_P$ (\ie, $V(H'_P) = V(H_P)$), and $T$ is mapped to an \emph{Eulerian tour} of $H'_P$.  
It is not hard to see that given this Eulerian tour of $H'_P$, a tour of $P$ can be recovered (by replacing multiple occurrences of a grid point with the points in $P$ that are mapped to it). Moreover, the scatter of the recovered tour is not far from that of $T$ (by Lemma~\ref{lem4}(i)). However, the edges of $H'_P$ are not available to us, and guessing them seems prohibitively expensive.  
The key idea is that it is sufficient to consider the portion of $H_P$ that falls inside $B_p \cup B_q$, for points $p$ and $q$ chosen as in Lemma~\ref{lem3a}. \\

We describe at a high level the approximation algorithm (see algorithm~A for the details, and Fig.~\ref{fig2} for an illustration). We assume that the optimal tour $T$ has the property from Lemma~\ref{lem3a} with respect to points $p,q \in P$, and that such $p$ and $q$ can be ``guessed''. Let $\Bc = B_p \cup B_q$ (recall that $B_p$ and $B_q$ have radius $\ell (1 + \epsilon)$.) Then, $T$ consists of edges fully inside $\Bc$, and ``hops'' of two consecutive edges, connecting a point outside $\Bc$ with two points inside $\Bc$. 
We replace such hops with virtual edges, both of whose endpoints are in $\Bc$. The resulting tour is entirely in $\Bc$, and we can ``guess'' its image under $f_{\delta}(\cdot)$. This is now feasible, since the number of grid points involved is bounded by Lemma~\ref{lem4}(ii). We also guess the multiplicities of all edges, \ie, how many original edges have been mapped to each edge, and how many edges are virtual. 

We then disambiguate the virtual edges, finding a suitable midpoint outside of $\Bc$ for each hop. This is achieved by solving a perfect matching problem. We obtain a multigraph on which we find an Eulerian tour. Finally, from the Eulerian tour we recover a tour of $P$. The distortion in distances due to rounding (\ie, the approximation ratio) is controlled by the choice of the grid resolution $\delta$.

\paraspace{Algorithm A ($\PTAS$).}\ \\ {\sc Input:} Set $P$ of $n$ points in $\mathbb{R}^d$ and a precision parameter $\epsilon > 0$.\\
{\sc Output:} A value $\ell$ such that $P$ admits a tour with scatter at least $\ell(1 - \epsilon)$, and no tour with scatter greater than $\ell$. \\

\vspace{-0.12in}
\begin{compactenum}
\item {\bf Iterate $p,q$ over all pairs of points in $P$ in decreasing order of $d(p,q)$.}

\item Let $\ell = d(p,q)$. Let $\delta = \epsilon \ell / (2 \sqrt{d})$. Let $\ell' = \ell(1 - \epsilon/2)$.

Let $B_p$ and $B_q$ be the open balls with centers $p$ and $q$ of radius $\ell (1 + \epsilon)$, and let $\Bc = B_p \cup B_q$. 

\item Let $f\colon P \rightarrow \mathbb{G}_\delta$ map points to their nearest grid point. 

Compute the set $\Cc = \{ f(x) \mid x \in (P \cap \Bc)\}$, and for each $v\in \Cc$, compute the sets $f^{-1}(v) = \{x \mid f(x) = v\}$.
\item Let $m,v: {\Cc \choose 2} \rightarrow \mathbb{N}$. For all $\{u,v\} \subseteq \Cc$, \emph{\bf guess} $m(\{u,v\})$ and $v(\{u,v\})$, such that\\ (i) $m(\{u,v\}) = 0$ \ if \  $d(u,v) < \ell'$, and\\
(ii) for all $v\in \Cc$:\\
$\displaystyle\sum_{u \in \Cc \setminus \{v\}}{\!\!\!\big( m(\{u,v\}) + v(\{u,v\}) \big)} = 2  |f^{-1}(v)|$. 
 \label{step:guess}

\item Construct a bipartite graph $B$ as follows:\\ 
- for each $\{u,v\} \subseteq \Cc$, add $v(\{u,v\})$ vertices labeled $\{u,v\}$ to the left vertex set $L(B)$.\\
- for each $x \in P \setminus \Bc$ add a vertex labeled $x$ to the right vertex set $R(B)$.\\
- add edge $(\{u,v\},x)$ between $\{u,v\} \in L(B)$ and $x \in R(B)$ to $E(B)$ iff ${d(u,x), d(v,x) \geq \ell'}$.

\item Find a perfect matching $M$ of $B$; if there is none, output {\sc No}.
\item Construct a multigraph $H$ as follows:\\
- let $V(H) = \Cc \cup (P \setminus \Bc)$.\\
- for all $\{u,v\} \subseteq \Cc$ add $m(\{u,v\})$ copies of the edge $\{u,v\}$ to $E(H)$.\\
- for all $(\{u,v\},x) \in M$ add the edges $\{u,x\}$ and $\{v,x\}$ to $E(H)$.
\item Find an Eulerian tour $Q$ of $H$; if there is none, output {\sc No}.
\item Transform $Q$ into a tour $T$ of $P$, by replacing multiple occurrences of every point $v \in \Cc$ with the points in $f^{-1}(v)$ in arbitrary order. 

\item Output {\sc Yes}.\\
\end{compactenum}

\vspace{-0.04in}
\emph{Note}. The ``guessing'' in step~\ref{step:guess} should be thought of as a loop over all possible values of $m(\cdot)$ and $v(\cdot)$ satisfying the requirements. An output of {\sc No} in step~6 or in step~8 means that we proceed to the next values in the loop of step~4, or, if all values have been tried, then we proceed to the next iteration of the main loop in step~1.

\paraspace{Correctness.}

We prove two claims which together imply that algorithm~A is a $\PTAS$ for MSTSP: (1) if algorithm~A outputs {\sc Yes}, then there is a tour of $P$ with scatter at least $\ell(1-\epsilon)$, and (2) if there is a tour of $P$ with scatter at least $\ell$, then algorithm~A outputs {\sc Yes}.

(1) Suppose that algorithm~A returns {\sc Yes} in step 10, and thus steps 5--9 were successful with the current choice of $p,q \in P$ and the values of $m(\cdot)$ and $v(\cdot)$ chosen in step 4, and $T$ is a tour of $P$. Consider an arbitrary edge $\{x,y\}$ of $T$. By step 9, there is a corresponding edge $\{u,v\}$ in the Eulerian tour $Q$ of $H$. By construction of $H$ in step 7, either (a) $u,v \in \Cc$, or (b) $(\{u,w\},v) \in M$ or $(\{v,w\},u) \in M$, for some grid point $w \in \Cc$. 

In case (a) by condition (i) of step 4, we have $d(u,v) \geq \ell'$. Since $\{u,v\} = \{f(x), f(y)\}$, from Lemma~\ref{lem4}(i) we obtain $d(x,y) \geq \ell' - \delta \sqrt{d} = \ell(1 - \epsilon)$.

In case (b) by the construction of $B$ in step 5, we have $d(u,v) \geq \ell'$. Since $\{u,v\}$ equals either $\{f(x), y\}$ or $\{x, f(y)\}$, from Lemma~\ref{lem4}(i) we obtain $d(x,y) \geq \ell' - \delta \sqrt{d}/2 \geq \ell(1 - \epsilon)$.\\

\vspace{-0.1in}
(2) Assume now that a tour $T$ of $P$ with scatter at least $\ell$ exists. Assume also w.\,l.\,o.\,g.\ that $T$ has the special structure described in Lemma~\ref{lem3a} with respect to $p$ and $q$, \ie, it consists of hops and of edges entirely inside $\Bc = B_p \cup B_q$. 
Consider an edge $\{x,y\}$ of $T$, such that $x,y \in \Bc$. Then, after step 3, $f(x),f(y) \in \Cc$ holds, and we say that $\{x,y\}$ \emph{maps} to $\{f(x), f(y)\}$. 
Consider now a hop of $T$, \ie, two consecutive edges $\{x,w\}$ and $\{w,y\}$, such that $x,y \in \Bc$ and $w \in P \setminus \Bc$. Then, after step 3, $f(x),f(y) \in \Cc$ holds, and we say that the hop $\{x,w,y\}$ \emph{virtually maps} to $\{f(x), f(y)\}$.

Consider now the values $m(\cdot)$ and $v(\cdot)$ guessed in step 4, and let $m^*(\{u,v\})$ be the number of edges in $T$ that map to $\{u,v\}$, and let $v^*(\{u,v\})$ be the number of hops in $T$ that virtually map to $\{u,v\}$. Since every point in $T$ has degree $2$, it follows that the number of edges and hops mapped to an edge incident to some $u \in \Cc$ is twice the number of points in $P$ mapped to $u$. 
Furthermore, for all edges $\{x,y\} \subseteq \Bc$ of $T$, we have $d(f(x),f(y)) \geq \ell - \delta \sqrt{d} = \ell'$ (by Lemma~\ref{lem4}(i)). Therefore, guessing the correct values $m=m^*$ and $v=v^*$ is consistent with the conditions in step 4.

Let $\{x_1, w_1, y_1\}, \dots, \{x_k, w_k, y_k\}$ denote all the hops in $T$, where $w_i \in P \setminus \Bc$, for all $i$. Let $u_i = f(x_i)$, and $v_i = f(y_i)$, and let us call $M(T) = \big\{ \big(\{u_i,v_i\}, w_i\big) \mid i=1,\dots,k\big\}$ the \emph{hop-matching} of $T$. Observe that $M(T)$ is a valid perfect matching for the graph $B$ constructed in step 5, therefore, step 6 will succeed. We cannot, however, guarantee that $M(T)$ will be recovered in step 6. Observe that any other perfect matching $M$ of $B$ corresponds to a shuffling of the points $w_i$ in $B$, and thus it is a hop-matching of a tour $T'$ in which the points $w_i$ have been correspondingly shuffled. $T'$ differs from $T$ only in its hops, and by construction of $B$ in step 5, we see that $T'$ has a scatter at least $\ell' - \delta\sqrt{d}/2 \geq \ell(1 - \epsilon)$. 

It can be seen easily that the edges of $T'$ are mapped to an Eulerian tour of the multigraph $H$ constructed in step~7, and thus, step~8 succeeds. Again, we cannot guarantee that the recovered Eulerian tour is the same as the one to which $T'$ maps. Every Eulerian tour of $H$, however, has to respect the edge-multiplicities of $H$, which in turn are determined by the number of points that map to each vertex of $H$. Therefore, step~9 succeeds, and the output is {\sc Yes}.

\paraspace{Running time.}

The loop of step~1 represents a factor of $O(n^2)$ in the running time.
The cost of steps 2--3 is dominated by the cost of the loop that starts in step 4.

We need to bound $|\Cc|$, \ie, the number of grid points inside $\Bc$, to which some point of $P$ is mapped. By Lemma~\ref{lem4}(i) all such grid points are inside one of the balls with center $p$ and $q$, with radius $\ell + \epsilon + \delta \sqrt{d} / 2$. 

\pagebreak

By Lemma~\ref{lem4}(ii) we have 
$$|\Cc| \leq 2 \cdot \left(\frac{2\ell + 2 \epsilon \ell + \delta \sqrt{d}}{\delta} + 1\right)^d = 2 \cdot (4\sqrt{d}/\epsilon + 5\sqrt{d} + 1)^d \leq 2 \cdot (5\sqrt{d}/\epsilon)^d.$$

(In the last inequality we assume $\epsilon \leq 1/5$.)

Steps 5 and 6 amount to finding a perfect matching, and steps 7 and 8 amount to finding an Eulerian tour, both in a graph with $O(n)$ vertices. These steps can be executed in time $O(n^3)$ using standard algorithms. 
As for step 4, observe that the values of $m(\cdot)$ and $v(\cdot)$ over all pairs in $C$ sum to $|P \cap \Bc| \leq n$, so we need to consider at most ${n \choose |\Cc|^2}$ ways of distributing a value of at most $n$ into the integer values of $m$ and $v$. 
Multiplying, and using a standard bound on the binomial coefficient, we obtain that the running time is at most $O(n^{|\Cc|^2 + 3 + 2}) \preceq O\left(n^{(75d/\epsilon^2)^d}\right)$. \\

\vspace{-0.1in}
\emph{Note}. The restrictions on $m(\cdot)$ and $v(\cdot)$ in step \ref{step:guess} can be strengthened, resulting in a smaller number of iterations (and thus better running time). For instance, since each virtual edge corresponds to a hop via a point outside of $\Bc$, we could require the values of $v(\cdot)$ to sum to $|P \setminus \Bc|$. 
 We ignore such technicalities, as they do not affect the correctness of Algorithm~A -- in the case of wrong values, we get the {\sc No} output in some of the later steps. 

\begin{figure*}[t]
\center
\includegraphics[width=0.95\textwidth]{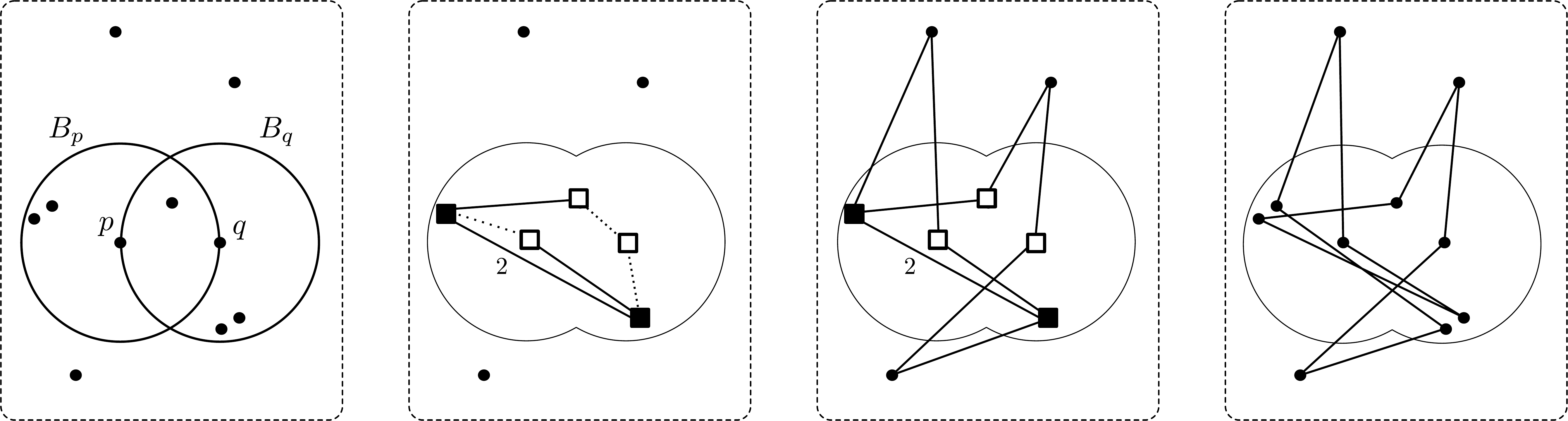}
\caption{Illustration of algorithm~A. (i) Input point set with open balls $B_p$ and $B_q$ with centers $p$ and $q$ and radius $\ell(1+\epsilon)$. (ii) Points inside $B_p \cup B_q$ mapped to grid points (shown as squares), and ``guessed'' edges. Filled squares indicate grid points to which more than one point is mapped. Dotted lines indicate virtual edges, and the number indicates the multiplicity of an edge (omitted if 1). (iii) Virtual edges matched to points outside of $B_p \cup B_q$ and extended to hops, resulting in a multigraph. (iv) An Eulerian tour of the multigraph, expanded to a tour on the initial point set.}
\label{fig2}
\end{figure*}

\section{The main result: an $\EPTAS$}\label{sec2}

In this section we strengthen the result of \textsection\,\ref{sec1} and prove Theorem~\ref{thm2}.

Let $\Dc$ be a doubling space with metric $d(\cdot)$, and let $d$ denote the doubling dimension of $\Dc$. Let $P$ be a set of $n$ points in $\Dc$. 
Suppose we are given a threshold value $\ell>0$, together with a precision parameter $\epsilon > 0$. Given these inputs, we seek an approximation for the \emph{decision version} of MSTSP: With running time polynomial in $n$, our algorithm returns ``yes'' if a tour of $P$ exists with scatter at least $\ell$, and returns ``no'' if there is no tour of $P$ with scatter at least $\ell(1-\epsilon)$. If neither of the two conditions hold, the algorithm is allowed to return ``yes'' or ``no'' arbitrarily. 
Since the optimum value of an MSTSP instance can only take one of the ${n \choose 2}$ possible distances, we can turn an approximation of the above kind into an approximation for the optimization problem with a simple binary search, at the cost of a $(\log{n})$-factor in the running time. In the following, we focus only on the decision problem. 

We start by presenting the main tools that we use in our algorithm. 
  
The following two graph-theoretic results are well-known (see \eg, \cite{bollobas}). Dirac's theorem (Lemma~\ref{dirac}) was also used by Arkin et al.\ to get a $0.5$-approximation for metric MSTSP. The Bondy-Chv\'atal theorem (Lemma~\ref{bondy}, \cite{BondyChvatal}) is a generalization of Lemma~\ref{dirac}. 
The classical proofs of both results are implicitly algorithmic.

\begin{lemma}[Dirac's theorem]\label{dirac}
A graph $G$ with $n$ vertices has a Hamiltonian cycle if the degree of every vertex in $G$ is at least $\frac{n}{2}$. Furthermore, in such a case, a Hamiltonian cycle can be found in $O(n^2)$ time.
\end{lemma}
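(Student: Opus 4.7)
The plan is to follow the classical rotation-extension argument, made constructive. First I would establish connectivity of $G$: if $G$ had two or more components, then one component would contain at most $n/2$ vertices, forcing some vertex in that component to have degree strictly less than $n/2$, contradicting the hypothesis. With connectivity in hand, I would then grow a longest path $v_1, v_2, \dots, v_k$ in $G$ and observe that every neighbor of $v_1$ and every neighbor of $v_k$ must lie on the path, since otherwise we could extend it.

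The core combinatorial step is to show that among the vertices of this path there is an index $i$ with both $v_1 v_{i+1} \in E(G)$ and $v_i v_k \in E(G)$. I would define the sets $S = \{i : v_1 v_{i+1} \in E(G)\}$ and $T = \{i : v_i v_k \in E(G)\}$, both contained in $\{1, \dots, k-1\}$, each of size at least $n/2$ by the degree hypothesis. Since $|S| + |T| \ge n > k - 1$, pigeonhole yields a common index $i$, and the edges $\{v_1, v_{i+1}\}$ and $\{v_i, v_k\}$ together with the path rotate the path into a cycle $C$ on the same vertex set. If $k = n$ this is the desired Hamiltonian cycle; otherwise, by connectivity there is some vertex $w \notin V(C)$ adjacent to a vertex $v_j \in V(C)$, and cutting $C$ at an edge incident to $v_j$ yields a path of length $k+1$, contradicting maximality.

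To obtain the $O(n^2)$ running time, I would make the argument iterative: maintain a path, and at each iteration either (i) extend it by a single edge if an endpoint has a neighbor off the path, or (ii) perform the rotation above to close it into a cycle on the current vertex set and then splice in an additional vertex using the connectivity argument. Each such modification strictly increases the number of vertices covered (which is bounded by $n$), and each modification is implemented by scanning neighborhoods of the current endpoints and of cycle vertices, which costs $O(n)$. This yields the claimed $O(n^2)$ bound.

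The main obstacle I anticipate is being careful in the rotation step to locate the splitting index $i$ efficiently (a single scan over the path suffices, using bitvectors for neighborhoods of $v_1$ and $v_k$) and to ensure that the reconstructed longer path is produced in $O(n)$ time rather than triggering a recomputation from scratch. Apart from this bookkeeping, the argument is essentially the textbook proof of Dirac's theorem and the correctness and termination follow immediately from the combinatorial claim above.
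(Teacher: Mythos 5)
Your proposal is the classical rotation--extension proof of Dirac's theorem made constructive, which is exactly what the paper relies on: the paper does not prove Lemma~\ref{dirac} itself but cites it as well known, noting that the classical proof is implicitly algorithmic and yields the $O(n^2)$ bound. Your argument (connectivity, longest path, the pigeonhole on the index sets $S$ and $T$, rotation, and the iterative extend-or-rotate-and-splice implementation with neighborhood bookkeeping) is correct and matches that standard treatment.
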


\begin{lemma}[Bondy-Chv\'atal theorem]\label{bondy}
Let $G = G_0$ be a graph with $n$ vertices, and let $G_1, \dots, G_k$ be a sequence of graphs. For all $i$, the graph $G_i$ is obtained from $G_{i-1}$ by adding an edge between some pair of vertices whose total degree in $G_{i-1}$ is at least $n$. The graph $G_k$ is Hamiltonian if and only if $G$ is Hamiltonian. Furthermore, given $G_k$ and a Hamiltonian cycle of $G_k$, a Hamiltonian cycle of $G$ can be found in $O(n^3)$ time.
\end{lemma}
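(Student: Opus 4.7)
The plan is to prove the if-and-only-if statement by reducing it to a single-edge undoing step, applied $k$ times in sequence, and then to turn that reduction into an $O(n^3)$ algorithm. The easy direction is clear: adding edges preserves Hamiltonicity, so if $G = G_0$ has a Hamiltonian cycle, so does every $G_i$, and in particular $G_k$. The substance is the reverse direction, for which the core claim I would isolate and prove by induction on $i$, running backwards from $k$ down to $0$, is the following single-step lemma: if $u,v$ are non-adjacent in a graph $H$ on $n$ vertices with $\deg_H(u)+\deg_H(v)\ge n$, and $H+\{u,v\}$ has a Hamiltonian cycle, then $H$ itself has a Hamiltonian cycle, and moreover one can be produced from the given cycle in $O(n)$ time.

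For the single-step lemma I would argue as follows. Fix a Hamiltonian cycle $C$ of $H+\{u,v\}$. If $C$ does not use the edge $\{u,v\}$, then $C$ already lies in $H$ and we are done. Otherwise, deleting $\{u,v\}$ from $C$ leaves a Hamiltonian path $v_1,v_2,\dots,v_n$ of $H$ with $v_1=u$ and $v_n=v$. Define
\[
S=\{\,i\in\{1,\dots,n-1\}:\{v_1,v_{i+1}\}\in E(H)\,\},\qquad
T=\{\,i\in\{1,\dots,n-1\}:\{v_i,v_n\}\in E(H)\,\}.
\]
Then $|S|=\deg_H(u)$ and $|T|=\deg_H(v)$, so $|S|+|T|\ge n>n-1$, and by pigeonhole there is some index $i$ lying in $S\cap T$. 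For any such $i$, the sequence
\[
v_1,v_{i+1},v_{i+2},\dots,v_n,v_i,v_{i-1},\dots,v_2,v_1
\]
is a Hamiltonian cycle of $H$ (it uses the path edges except $\{v_i,v_{i+1}\}$, together with the two new edges $\{v_1,v_{i+1}\}$ and $\{v_i,v_n\}$, avoiding $\{u,v\}$). Finding the index $i$ by a linear scan, and outputting the reversed cycle, takes $O(n)$ time.

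Iterating this single-step construction along the sequence $G_k,G_{k-1},\dots,G_0$ yields a Hamiltonian cycle of $G$ from one of $G_k$, establishing both the equivalence and an algorithm. For the running-time bound I would note that $k\le\binom{n}{2}=O(n^2)$, since each step strictly increases the edge count of a graph on $n$ vertices, and each undoing step runs in $O(n)$ as argued, for a total of $O(n^3)$. The main technical point to get right is the index bookkeeping in the pigeonhole argument (so that the "rotated" sequence is in fact a cycle in $H$ and not just in $H+\{u,v\}$); everything else is routine.
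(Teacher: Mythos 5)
Your proof is correct, and it is exactly the classical rotation argument (Ore/Bondy--Chv\'atal) that the paper itself does not reproduce but refers to as the ``well-known'' and ``implicitly algorithmic'' proof; the pigeonhole step, the index bookkeeping, and the $O(k\cdot n)=O(n^3)$ accounting (with $k\le\binom{n}{2}$ and $O(1)$ adjacency queries) are all handled correctly. Nothing further is needed.
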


Let $G_P$ denote the graph over $P$ containing all edges with length at least $\ell$, similarly as in \textsection\,\ref{sec2}. Observe that if the condition of Dirac's theorem (Lemma~\ref{dirac}) holds for $G_P$, then we are done. In the following, we assume that this is not the case. Hence, there is a vertex in $G_P$ whose degree is less than $\frac{n}{2}$. In other words, there is a point $p \in P$, such that $|B_p \cap P| > \frac{n}{2}$, where $B_p$ is the open ball of radius $\ell$ with center $p$. Let us fix $p$ to be such a point, and let $B_p$ and $B'_p$ denote the open balls of radius $\ell$, resp.\ $2 \ell$ with center $p$. We make a simple structural observation similar to Lemma~\ref{lem3a}.

\begin{lemma}
\label{lem3}
Suppose a tour $T$ of $P$ with scatter at least $\ell$ exists. Then there exists a tour $T'$ of $P$ with scatter at least $\ell$, such that for every pair $x,y \in P$ of neighboring points in $T'$, at least one of $x$ and $y$ is contained in $B'_p$.
\end{lemma}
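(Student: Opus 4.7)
The plan is to mimic the swapping argument of Lemma~\ref{lem3a}, but exploiting the fact that the point $p$ is chosen so that $|B_p \cap P| > n/2$, where $B_p$ has radius $\ell$ (not $2\ell$). Starting from a tour $T$ of scatter $\geq \ell$, I call an edge \emph{bad} if both of its endpoints lie outside $B'_p$, and I will show how to strictly decrease the number of bad edges while keeping the scatter $\geq \ell$. Iterating removes all bad edges, yielding the desired $T'$.

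The first step is the counting argument. Let $k = |B_p \cap P| > n/2$, and let $c$ denote the number of edges of $T$ that cross the boundary of $B_p$. Summing degrees of vertices of $T$ inside $B_p$, the number of edges of $T$ with both endpoints in $B_p$ equals $k - c/2$. If any bad edge $\{x,y\}$ exists, then in particular an edge lies entirely outside $B_p$, so the number of edges outside $B_p$, namely $n - k - c/2$, is at least $1$. This forces $c/2 \leq n - k - 1 < k$, so there is an edge $\{u,v\}$ of $T$ with both $u,v \in B_p$.

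Next, I apply a $2$-opt exchange to $\{x,y\}$ and $\{u,v\}$. Orienting the tour cyclically, exactly one of the two pairings $\{x,u\},\{y,v\}$ or $\{x,v\},\{y,u\}$ re-closes $T$ into a single Hamiltonian cycle; pick that pairing. The key length check is by triangle inequality: since $x,y \notin B'_p$ and $u,v \in B_p$, we have $d(x,p),d(y,p) \geq 2\ell$ and $d(u,p),d(v,p) < \ell$, so all four potential new edges have length strictly greater than $\ell$. The remaining edges of $T$ are unchanged, so the new tour still has scatter $\geq \ell$. Moreover, both new edges have one endpoint ($u$ or $v$) inside $B_p \subseteq B'_p$, so neither is bad; and the removed edge $\{u,v\}$ was not bad either. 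Hence the count of bad edges strictly decreases.

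The argument from the first step applies verbatim to the new tour (only the property $|B_p \cap P| > n/2$ is used, and $P$ is unchanged), so as long as a bad edge remains, we can perform another exchange. After at most $n$ iterations, no bad edges remain, which is exactly the statement of the lemma. The only mildly delicate point is the case analysis for the $2$-opt swap producing a single cycle rather than two disjoint cycles, but this is the standard $2$-opt observation and causes no difficulty once the tour is given a fixed orientation.
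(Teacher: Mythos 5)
Your proof is correct and follows essentially the same route as the paper: find an edge lying entirely inside $B_p$ (guaranteed since $B_p$ holds more than $n/2$ points), perform the $2$-opt exchange with the offending edge, and note via the triangle inequality that the new edges have length at least $\ell$, iterating until no edge has both endpoints outside $B'_p$. Your degree-counting derivation and the explicit ``number of bad edges'' potential are just more detailed versions of the paper's one-line observations, not a different argument.
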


\begin{proof}
Suppose this is not the case. Since $B_p$ contains more than half of the points in $P$, it has to contain at least one edge of $T$ entirely. Let $\{z,t\}$ be such an edge. Since both $x$ and $y$ are outside of $B'_p$ we have $d(x,t), d(x,z), d(y,t), d(y,z) \geq \ell$. Thus, we can replace the edges $\{x,y\}$ and $\{z,t\}$ in $T$, with either $\{x,z\}$ and $\{y,t\}$, or $\{x,t\}$ and $\{y,z\}$, depending on the ordering of the points in $T$. We obtain another tour with scatter at least $\ell$, that no longer contains the edge $\{x,y\}$. We proceed in the same way until we have removed all edges with both endpoints outside of $B'_p$. See Fig.~\ref{fig1} for an illustration. \qedd
\end{proof}

\begin{figure}[tb]
\centering
\includegraphics[width=2in]{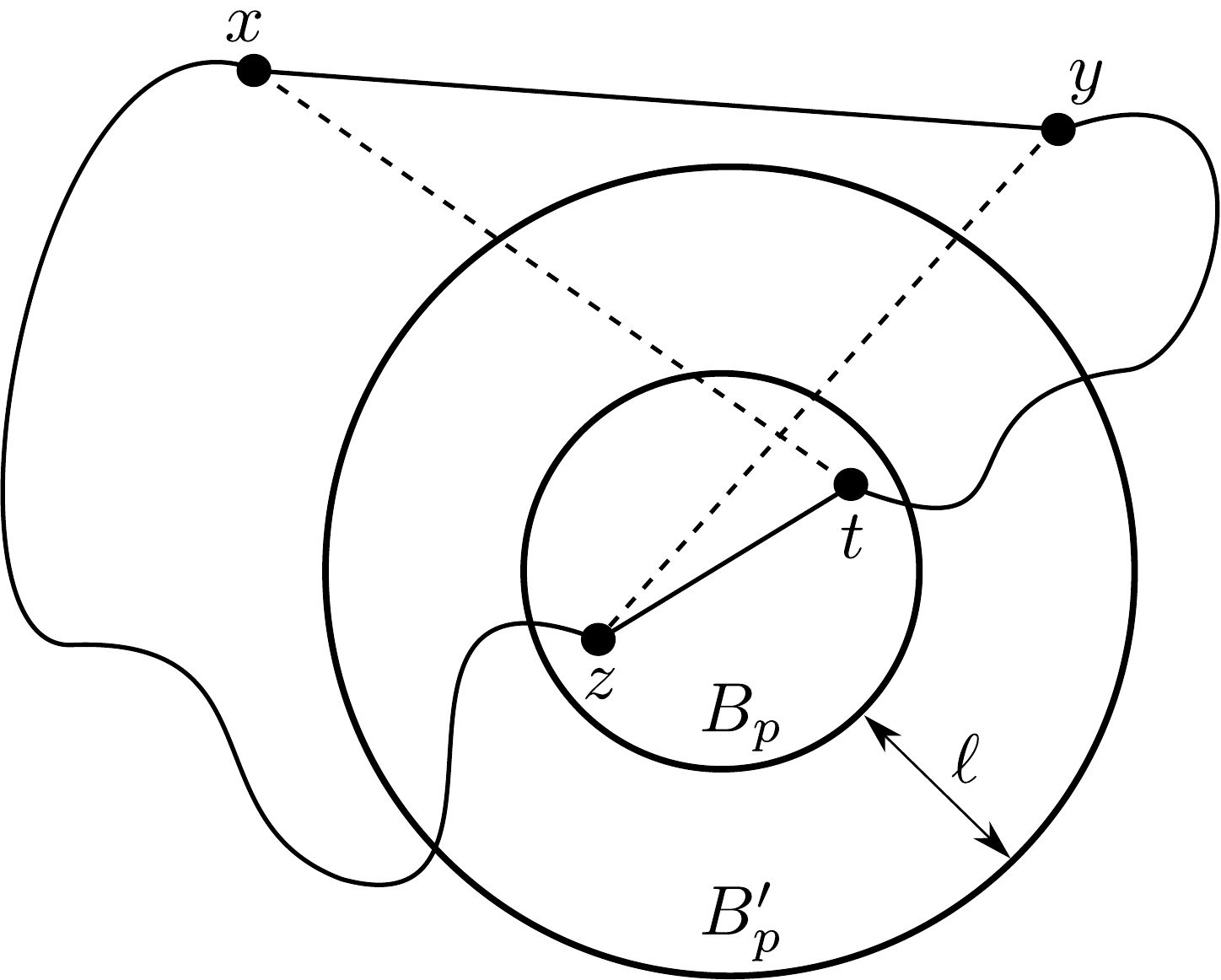}
\caption{Illustration of Lemma~\ref{lem3}. The dashed edges can replace $\{x,y\}$ and $\{z,t\}$ in the optimal tour.}
\label{fig1}
\end{figure}

The next tool we use is an efficient algorithm for the \emph{many-visits} variant of TSP, found in 1984 by Cosmadakis and Papadimitriou~\cite{Cosmadakis}. We state the result in the following variant.  

\begin{lemma}[Many-visits TSP]\label{lemcp}
Given a graph $G$ with vertices $v_1, \dots, v_k$, and integers $n_1, \dots, n_k > 0$, there is an algorithm CP that finds a tour of $G$ that visits vertex $v_i$ exactly $n_i$ times, for all $i$, or reports that no such tour exists.

The running time of algorithm CP is $O \big( n + k^3 \cdot \log{n} + 2^{O(k \log {k})} \big)$, where $n = \sum{n_i}$.
\end{lemma}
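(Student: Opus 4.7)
The plan is to reduce many-visits TSP to the problem of constructing a connected Eulerian multigraph $M$ on vertex set $\{v_1,\dots,v_k\}$, using only edges of $G$, such that $\deg_M(v_i) = 2n_i$ for every $i$. Any Eulerian circuit of $M$ --- recoverable in linear time by Hierholzer's algorithm --- is precisely a tour visiting each $v_i$ exactly $n_i$ times; conversely every valid tour arises in this way. It therefore suffices to find a nonnegative integer multiplicity vector $(x_{ij})_{\{v_i,v_j\}\in E(G)}$ satisfying $\sum_j x_{ij} = 2n_i$ for all $i$ with connected support, or to certify that none exists. The linear output phase accounts for the $O(n)$ term in the final running time.

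The core difficulty is that the multiplicities $x_{ij}$ can be as large as $n$, so direct enumeration is impossible. Following Cosmadakis and Papadimitriou, I would decompose $x = y + z$ into a small-multiplicity \emph{skeleton} $y$ whose support is a connected spanning subgraph of $G$, and a bulk correction $z \geq 0$ absorbing the remaining degree at each vertex. A key structural lemma would guarantee that whenever a feasible $x$ exists, one can always choose the skeleton $y$ from a combinatorial family of size $2^{O(k \log k)}$: concretely, $y$ is essentially determined by a spanning tree of $G$ (at most $k^{k-2}$ candidates by Cayley's formula) augmented with $O(k)$ parity-correcting edges to make degrees compatible modulo 2.

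I would then enumerate all candidate skeletons $y$. For each one, the residual degree sequence $r_i = 2n_i - \deg_y(v_i)$ must be realized by $z$, which yields a transportation-type integer feasibility problem with $k$ equations, $O(k^2)$ nonnegative integer variables, and right-hand sides of magnitude $O(n)$. This system can be solved in $O(k^3 \log n)$ arithmetic operations using Hermite normal form, and the bulk of this cost can be shared across skeletons (e.g., by precomputing the Hermite form of the underlying incidence structure once), so the $O(k^3 \log n)$ factor contributes additively rather than multiplying with the $2^{O(k \log k)}$ enumeration. Since $y$ is already connected by construction, any nonnegative integer solution $z$ extends $y$ to a valid Eulerian multigraph, which is then decoded into a tour visiting each $v_i$ the correct number of times.

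The hardest part of the argument is the structural lemma justifying that the skeleton family has size $2^{O(k \log k)}$ rather than the naive $2^{O(k^2)}$ obtained from choosing an arbitrary subset of edge multiplicities modulo a constant. Establishing this requires a careful analysis showing that any feasible Eulerian multigraph can be rearranged, without losing feasibility, so that the non-bulk portion is captured by a spanning-tree-augmented object of bounded description length. The remaining components --- the Eulerian decoding, the transportation subproblem, and the amortized linear-algebraic bookkeeping --- are standard and contribute only the polynomial factors in the final running-time bound.
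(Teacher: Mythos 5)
The paper's own proof of Lemma~\ref{lemcp} does not reprove the Cosmadakis--Papadimitriou algorithm at all: it is a short reduction to their cited result for the complete graph with arbitrary edge weights. One assigns weight $1$ to the edges of $G$ and weight $2$ to the non-edges, runs their algorithm, and checks whether the optimum value equals $\sum n_i$ (i.e., whether a tour using only weight-one edges exists); since their algorithm returns only edge multiplicities, the tour itself is recovered by an Eulerian circuit in $O\big(\sum n_i\big)$ time. Your proposal instead attempts to re-derive the Cosmadakis--Papadimitriou algorithm from scratch. The parts of your argument that are solid --- the equivalence between many-visits tours and connected Eulerian multigraphs with degree $2n_i$ at $v_i$, and the linear-time Eulerian decoding --- coincide with the paper's recovery step and are not where the difficulty lies.

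The genuine gaps are exactly at the two places that make the lemma nontrivial. First, the structural lemma asserting that a feasible solution can always be decomposed as a skeleton from a family of size $2^{O(k\log k)}$ plus a bulk part is stated but explicitly deferred (``requires a careful analysis''); this is precisely the content of the Cosmadakis--Papadimitriou theorem, so as written the hard part is assumed rather than proved. (Observing that a spanning tree of the support of a feasible solution can serve as a connected skeleton makes the $k^{k-2}$ bound plausible, but you would still need to prove completeness of your specific family, including whether and how many parity-correcting edges are required.) Second, the residual step is mishandled: finding an integer $z \geq 0$ supported on $E(G)$ with $\sum_j z_{ij} = r_i$ is not solved by Hermite normal form, which certifies integrality of a linear system but ignores the nonnegativity constraints that are essential here; the problem is a degree-constrained-subgraph ($b$-matching-type) feasibility question, and it is a transportation problem only in bipartite situations. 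The claim that the $O(k^3\log n)$ work can be amortized across all $2^{O(k\log k)}$ skeletons so that it contributes additively is also unsubstantiated. To make your route work you would have to supply both missing ingredients; the paper's route avoids this by invoking Cosmadakis--Papadimitriou as a black box via the $\{1,2\}$-weight trick together with the Eulerian-tour recovery.
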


Lemma~\ref{lemcp} is shown by Cosmadakis and Papadimitriou in the variant where $G$ is the complete graph, and for arbitrary given edge-weights the shortest tour is to be found (respecting the given multiplicities $n_i$). To obtain the statement in Lemma~\ref{lemcp} it is sufficient to choose suitable edge-weights, \eg, the values $1$ and $2$ for wanted, resp.\ unwanted edges. We then need to check whether the length of the shortest tour is at most $\sum{n_i}$, \ie, whether a tour exists that uses only weight-one edges. 

In its original form, algorithm CP returns only the multiplicities of edges in the solution, not the tour itself. We can find a tour with the obtained edge-multiplicities by computing an Eulerian tour, with running time $O\big(\sum{n_i} \big)$. We refer to~\cite{Cosmadakis} for more precise bounds on the running time. 

Finally, we replace the grid-cover used in \textsection\,\ref{sec1} by an $\epsilon$-net with stronger properties. 

Consider a set of points in $\Dc$. We call a point set $\Rc_P$ in $\Dc$ an \emph{$\epsilon$-net} of $P$, if the balls of radius $\epsilon$ centered at the points in $\Rc_P$ cover $P$. Given an $\epsilon$-net $\Rc_P$ of $P$, we define a function $f(\cdot)$ that maps every point in $P$ to the nearest point in $\Rc_P$. We state the following.

\begin{lemma}\label{lem6}
Given an open ball $B$ of radius $\ell$ in in $\Dc$, $\delta > 0$, and a point set $P \subset B$, we can find a $\delta$-net $\Rc$ of $P$, in deterministic $O(|P|^2)$ time, with the following properties.
\begin{compactenum}[(i)]
\item $d(x,y) \geq d(f(x),y) - \delta$ for all $x,y \in \Rc$.
\item $|\Rc| \leq \big( 2 \ell  / \delta + 1 \big) ^d$.
\end{compactenum}
\end{lemma}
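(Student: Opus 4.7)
The plan is to construct $\Rc$ by a simple greedy sweep, and then verify each property separately: property (i) via the triangle inequality, and property (ii) via a packing argument that exploits the doubling dimension. Specifically, initialize $\Rc \gets \emptyset$ and iterate through the points of $P$ in arbitrary order; add the current point $x$ to $\Rc$ iff $d(x,r) > \delta$ for every $r$ already in $\Rc$. Upon termination, every $x \in P$ must lie within distance $\delta$ of some $r \in \Rc$ (otherwise $x$ itself would have been inserted), so $\Rc$ is a $\delta$-net of $P$. Moreover, by the selection rule, distinct points of $\Rc$ are pairwise at distance strictly greater than $\delta$. The inner check for each point of $P$ costs $O(|\Rc|) = O(|P|)$, so the whole construction runs in deterministic $O(|P|^2)$ time.

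For property (i), let $x, y \in P$. Since $f(x)$ is the nearest net point to $x$ and the net covers $P$ within distance $\delta$, we have $d(x, f(x)) \leq \delta$. The triangle inequality then gives $d(f(x), y) \leq d(f(x), x) + d(x, y) \leq d(x, y) + \delta$, which rearranges to the claim (I read the condition $x,y \in \Rc$ in the statement as a typo for $x,y \in P$; otherwise $f(x) = x$ and the inequality is vacuous). For property (ii), note that $\Rc \subseteq P \subset B$, so $\Rc$ lies inside a ball of radius $\ell$. Applying the doubling property $k := \lceil \log_2 (2\ell/\delta) \rceil$ times, $B$ is covered by at most $2^{dk}$ balls of radius $\ell/2^k \leq \delta/2$. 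Since any two points of $\Rc$ are at distance $> \delta$, each such small ball contains at most one point of $\Rc$, so $|\Rc| \leq 2^{dk}$. A careful accounting (bounding $2^k$ by $2\ell/\delta$ rounded up rather than doubled) yields the stated bound $(2\ell/\delta + 1)^d$.

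The main subtlety is squeezing the constant in (ii) down to $(2\ell/\delta + 1)^d$ rather than the looser $(4\ell/\delta)^d$ that naive iteration of the doubling inequality produces. This matters because the exponent $d$ amplifies any slack: a factor of $2$ loss in the base would inflate the final running time from $(13/\epsilon)^d$ to something appreciably larger. The cleanest route I see is to track the doubling iteration with integer $k$ and use $2^k \leq (2\ell/\delta) + O(1)$ rather than $2 \cdot (2\ell/\delta)$, together with the observation that when the ratio $2\ell/\delta$ is already a power of two no rounding-up is incurred at all. Aside from this bookkeeping, every ingredient is standard, so I do not anticipate further obstacles.
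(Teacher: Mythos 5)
Your construction and analysis are essentially the paper's own proof: the same greedy selection (yielding pairwise separation $>\delta$ and coverage within $\delta$), the triangle inequality for (i) (and yes, $x,y\in\Rc$ in the statement should be read as $x,y\in P$), and a packing-versus-covering argument via iterated doubling for (ii). One caveat: your proposed fix for the constant, namely $2^k \le 2\ell/\delta + O(1)$ for $k=\lceil \log_2(2\ell/\delta)\rceil$, is false in general (the ceiling can nearly double the ratio); however, the paper does not do better---it explicitly assumes ``for simplicity'' that $2\ell/\delta+1$ is a power of two, which is exactly your parenthetical observation, and the slack absorbed later in the bound $K\le(13/\epsilon)^d$ makes this harmless.
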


\begin{proof}
Part (i) follows from the definition of a $\delta$-net.
We construct $\Rc$ greedily by repeatedly picking an unmarked point in $P$, adding it to $\Rc$, and marking it, together with all points in $P$ that are at distance at most $\delta$ from the picked point, until all points in $P$ are marked. We observe that the balls of radius $\delta/2$ centered at the points in $\Rc$ are disjoint and contained in a ball of radius $\ell + \delta/2$. By the fact that the size of a packing is at most the size of the minimum cover, and observing that $\big( 2 \ell  / \delta + 1 \big) ^d$ balls of radius $\delta/2$ can cover an arbitrary ball of radius  $\ell + \delta/2$, we obtain the bound of part (ii). In the last step we repeatedly apply the bound from the definition of a metric space with doubling dimension $d$, and we assume for simplicity that $2\ell/\delta + 1$ is a power of two. 
\qedd
\end{proof}

We are ready to present the algorithm. Consider a set $P$ of points in $\Dc$, a threshold value $\ell$, and a precision parameter $\epsilon > 0$. We solve the approximate decision problem described above. Recall that $G_P$ is the graph over $P$, containing the edges with length at least $\ell$.

Call a point $p \in P$ \emph{low-degree}, if $|B_p \cap P| > \frac{n}{2}$, where $B_p = B_1$ is the open ball of radius $\ell$ with center $p$. If no such point exists, we are done, by Lemma~\ref{dirac}. Let $B_2$ be an open ball of radius $2\ell$ with a low-degree point $p$ at its center. Observe that all low-degree points are contained in $B_2$. Suppose otherwise that there is a low-degree point $q$ outside of $B_2$. Then the balls $B_p$ and $B_q$ are disjoint and contain more than $n$ points in total, a contradiction. (A stronger bound on the radius of $B_2$ can be shown, as a function of $d$, but the value of $2\ell$ is sufficient for our purposes.) Let $B_3$ denote the open ball of radius $3 \ell$ having the same center as $B_2$. 

Algorithm~B uses a similar rounding approach as algorithm~A. We first sketch the main idea, then present the details of the algorithm.\\

We wish to find a tour of $P$ (\ie, a Hamiltonian cycle of $G_P$). 
Given $B_2$ and $B_3$ as described above, and using Lemma~\ref{bondy}, we can augment $G_P$ with all edges among points outside $B_2$ (since these are all high-degree points). Furthermore, observe that every edge between a point outside $B_3$ and a point inside $B_2$ is already present (since the distance between such points is at least $\ell$). Thus, every point outside of $B_3$ can be assumed to be connected to every point in $P$. We can therefore collapse every point in $P \setminus B_3$ to a single ``virtual point'' and only focus on the portion of the tour that falls inside $B_3$.

 Similarly to algorithm~A, we can round points inside $B_3$ to grid points (in this case, to a $\delta$-net). Since we no longer have the issue of hops from algorithm~A, the problem that remains to be solved is simply the many-visits TSP described in Lemma~\ref{lemcp}.

\newpage
\paraspace{Algorithm B ($\EPTAS$).}\ \\ {\sc Input:} Set $P$ of $n$ points in $\Dc$, a threshold $\ell$, and a precision parameter $\epsilon > 0$.\\
{\sc Output:} {\sc Yes}, if $P$ admits a tour with scatter at least $\ell$, and {\sc No} if $P$ does not admit a tour with scatter at least $\ell(1 - \epsilon)$.\\

\vspace{-0.1in}
\begin{compactenum}
\item Let $\delta = \epsilon \ell / 2$. 
\item Find a low-degree point $p \in P$, and let $B_2$ resp.\ $B_3$ be the open balls with center $p$ and radius $2 \ell$ resp.\ $3 \ell$. If no such $p$ exists, output {\sc Yes}.

\item Find the $\delta$-net $\Rc$ of $P \cap B_3$ by the algorithm of Lemma~\ref{lem6}, and let $f\colon (P \cap B_3) \rightarrow \Rc$ map points to their nearest point in $\Rc$. 
For each $v\in \Rc$, compute the sets $f^{-1}(v) = \{x \mid f(x) = v\}$.
\item Construct a graph $G'$ as follows:\\ 
- let $V(G') = \Rc \cup \{q\}$.\\
- for all $\{u,v\} \subseteq \Rc$ add edge $\{u,v\}$ to $E(G')$ iff $d(u,v) \geq \ell$.\\
- for all $u \in \Rc$ add edge $\{q,u\}$ to $E(G')$.
\item Let $n_v = | f^{-1}(v) |$ for every $v \in \Rc$, and let $n_q = n - |P \cap B_3|$.
\item Using algorithm $CP$ (Lemma~\ref{lemcp}), find a many-visits tour $Q$ of $G'$ with multiplicities $n_v$ for every $v \in V(G')$; if there is none, output {\sc No}.
\item Transform $Q$ into a tour $T$ of $P$, by replacing multiple occurrences of every point $v \in \Rc$ with the points in $f^{-1}(v)$ in arbitrary order, and by replacing occurrences of $q$ with the points in $P \setminus B_3$ in arbitrary order.
\item Transform $T$ into a tour of $G_P$ by the algorithm of Lemma~\ref{bondy}.
\item Output {\sc Yes}.
\end{compactenum}

\paraspace{Correctness.}
Again, we show two claims: (1) if algorithm~B outputs {\sc Yes}, then there is a tour of $P$ with scatter at least $\ell(1-\epsilon)$, and (2) if there is a tour of $P$ with scatter at least $\ell$, then algorithm~B outputs {\sc Yes}.

(1) If we obtain {\sc Yes} in step~2, then by Lemma~\ref{dirac} there is a tour of $P$ with scatter at least $\ell$. Suppose that we obtain {\sc Yes} in step~9 and thus steps~1--8 execute successfully. Let $G'_P$ be the graph over $P$ with all edges with length at least $\ell (1- \epsilon)$. We wish to show that $G'_P$ is Hamiltonian. Since step~6 succeeds, a many-visits tour of the graph $G'$ constructed in step~4 is found. Since the number of visits in each vertex equals the number of points from $P$ mapped to that vertex, in step~7 we successfully recover a tour $T$ of $P$. 

We wish to show that tour $T$ is a Hamiltonian tour of some graph $G^{*}_P$ that can be obtained from $G'_P$ by repeatedly connecting pairs of vertices whose sum of degrees is at least $n$ (see Lemma~\ref{bondy}). Then, the existence of a Hamiltonian cycle in $G'_P$ follows by Lemma~\ref{bondy} from the existence of a Hamiltonian cycle in $G^{*}_P$. 

In $T$ there are two types of edges: (a) edges between points inside of $B_3$, and (b) edges between a point inside $B_3$ and a point outside of $B_3$. Observe that there are no edges between two points outside of $B_3$, since all points outside of $B_3$ were mapped to the same vertex $q$, and we had no self-edges in $G'$. Consider an arbitrary edge $\{x,y\}$ of $T$. We need to show that $d(x,y) \geq \ell(1-\epsilon)$. Suppose that $\{x,y\}$ is of type (a). By the construction of $G'$ it follows that $d(f(x),f(y)) \geq \ell$, and by Lemma~\ref{lem6}(i) it follows that $d(x,y) \geq \ell - 2\delta = \ell(1-\epsilon)$. Suppose now that $\{x,y\}$ is of type (b), and $x \notin B_3$. If $y \in B_2$, then clearly $d(x,y) \geq \ell$. Otherwise, if $y \notin B_2$, then both $x$ and $y$ are high-degree points (since $B_2$ contains all low-degree points), and thus the sum of their degrees in $G'_P$ is at least $n$, therefore $G'_P$ can safely be augmented with the edge $\{x,y\}$, without affecting its Hamiltonicity.\\

\vspace{-0.1in}
(2) Suppose there exists a tour $T^*$ of $P$ with scatter at least $\ell$ (\ie, a Hamiltonian cycle of $G_P$). 
If there is no low-degree point, step 2 correctly outputs \textsc{Yes}.
Otherwise $B_2$ is centered at a low-degree point and therefore we can assume that $T^*$ has the hop-structure described in Lemma~\ref{lem3} with respect to $B_2$. 
In words, there are no edges of $T^*$ with both endpoints outside of $B_2$. Then, it is easy to check that in steps~3,4 and 5, $f(\cdot)$ maps $T^*$ to a many-visits TSP of the constructed graph $G'$, matching the required multiplicities. Therefore, step~6 succeeds. Step~7 always succeeds, but the resulting tour $T$ may not be the same as $T^*$. In particular, $T$ might have short edges that are not contained in $G_P$, nonetheless, they are contained in an augmentation of $G_P$ as described in part (1), therefore step~8 succeeds. 

\paraspace{Running time.}

The costs of step~6 and step~8 dominate the overall running time. The running time of these steps can be bounded using Lemma~\ref{lemcp}, respectively Lemma~\ref{bondy}, obtaining the total running time $O(n^3 + 2^{O(k \log{k})} + k^3 \cdot \log{n})$, where $k = |\Rc|$. Using Lemma~\ref{lem6}(ii), we obtain $k \leq \left(\frac{6\ell}{\delta} + 1 \right)^d \leq (13/\epsilon)^d$, finishing the proof.

\section{Hardness in high-dimensional Euclidean spaces}\label{sec4}

Trevisan~\cite{Trevisan} showed that the standard TSP is $\APX$-hard in the $\log{n}$-dimensional space $\mathbb{R}^{\log{n}}$, where the distances are computed with respect to an arbitrary $\ell_p$ norm.
The main ingredient of his proof is an embedding of graphs with edge costs in $\{1,2\}$ into $\mathbb{R}^{\log{n}}$.
The embedding is done by assigning binary vectors as labels to the vertices, such that the Hamming distance between two vertex-labels is $D_1$ if the edge between the respective vertices has cost one, and $D_2$ otherwise.
The values of $D_1$ and $D_2$ differ by a constant factor.

The proof relies on the fact that there is a value $B_0$ such that TSP with edge costs $\{1,2\}$ is $\APX$-hard, even if each vertex is incident to at most $B_0$ cost-one edges.

The natural idea to adapt Trevisan's result to MSTSP is to invert the graph: each edge of cost one becomes an edge of cost two and vice versa.
In the new graph, however, the previously used encoding of vertices is not suitable anymore.
Instead of decreasing the Hamming distance of vectors in the case of cost-one edges, now we have to \emph{increase} the Hamming distance in case of cost-two edges.
Our proof uses the general outline of Trevisan's result, but differs in certain details. \\

We first show the hardness of approximating MSTSP in $O(2^n)$-dimensional Hamming spaces. The nature of MSTSP allows us to use a gap-introducing reduction instead of an approximation-preserving reduction. Akiyama et al.~\cite{ANS80} showed that it is $\NP$-hard to decide whether a bipartite $3$-regular graph has a Hamiltonian cycle. Given such a graph, we associate with each vertex a vector from $\{0,1\}^{3\cdot 2^n}$, and consider the edge costs to be the Hamming distance between vertices (\ie, the number of positions where the two vectors are distinct).

For the reduction we use first-order Reed-Muller codes and their complements.
A first-order Reed-Muller code $H_n \subset \{0,1\}^{2^n}$ is a set of $n$ vectors of length $2^n$ such that the pairwise Hamming distance of two code-words is \emph{exactly} $2^n/2$.
The complement $\bar{u}$ of a code-word $u$ is $1^{2^n} - u$, \ie, we invert each entry of $u$.
Note that the Hamming distance of $u$ and $\bar{u}$ is $2^n$, but for all $u'\in H_n$, $u' \neq u$, the Hamming distance of $\bar{u}$ and $u'$ is $2^n/2$.
Let $\bar{H}_n$ be the code that for each vector $u \in H_n$ contains the vector $\bar{u}$. \\

Let $G$ be a bipartite $3$-regular graph with $n$ vertices. It is well-known that such a graph has a $3$-edge coloring that can be found efficiently. Consider such an edge-coloring.
We fix an ordering of the edges of $G$ of the same color, and refer to the $k$th edge of color $i$ for $1 \le k \le n$ and $1 \le i \le 3$.

To each vertex of $G$ we assign a binary vector of length $3 \cdot 2^n$.
The first $2^n$ entries belong to the first color, the second $2^n$ entries to the second color, and the third $2^n$ entries to the third color.

Let $e=\{u,v\}$ be the $k$th edge of color $i$.
Then we set the $i$th $2^n$ entries of $u$ to the $k$th code-word $a$ of $H_n$, and we set the $i$th $2^n$ entries of $v$ to $\bar{a}$.
It does not matter which of the two vertices obtains which of the two code-words.

\begin{lemma}
    Let $u,v$ be two vertices of $G$, and let $d_H$ be the Hamming distance between the vertices (\ie, $3 \cdot 2^n$-vectors).
    \begin{compactenum}
        \item If $\{u,v\} \in E(G)$, then $d_H(u,v) = 2 \cdot 2^n$; and
        \item if $\{u,v\} \notin E(G)$, then $d_H(u,v) = 3\cdot 2^n/2$.
    \end{compactenum}
\end{lemma}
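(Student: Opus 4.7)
The plan is to reduce the Hamming-distance computation to a block-by-block sum. Since $G$ is $3$-regular and the edge-coloring is proper, every vertex is incident to exactly one edge of each color $i\in\{1,2,3\}$; so in the $i$th block of length $2^n$, every vertex receives exactly one code-word, either the $k$th code-word $a\in H_n$ or its complement $\bar a\in\bar H_n$, depending on which of the two roles the vertex plays on its unique color-$i$ edge.

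The key combinatorial input is the distance structure of $H_n\cup\bar H_n$. From the definition of a first-order Reed-Muller code and its complement one has $d_H(a,\bar a)=2^n$ for each $a\in H_n$, and $d_H(x,y)=2^n/2$ for every other pair $x,y\in H_n\cup\bar H_n$. Explicitly, this covers the three cases $d_H(a,a')$, $d_H(\bar a,\bar a')$, and $d_H(a,\bar a')$ for distinct $a,a'\in H_n$: the first follows from the definition of $H_n$, the second by complementing both vectors (an operation that preserves Hamming distance), and the third is exactly the property of $\bar H_n$ recorded in the setup.

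For part (1), suppose $\{u,v\}$ has color $i$. Block $i$ contributes $2^n$, because $u$ and $v$ receive complementary code-words $a$ and $\bar a$. In each of the two remaining blocks $j\neq i$, the vertices $u$ and $v$ lie on distinct color-$j$ edges of $G$, so their code-words form a non-complementary pair and that block contributes $2^n/2$. Summing gives $2^n+2\cdot(2^n/2)=2\cdot 2^n$.

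For part (2), if $\{u,v\}\notin E(G)$, then for every color $j$ the vertices $u$ and $v$ lie on distinct color-$j$ edges (otherwise that edge would witness $\{u,v\}\in E(G)$), so each of the three blocks contributes $2^n/2$, for a total of $3\cdot 2^n/2$. I do not foresee a real obstacle: the only thing to check carefully is that all three non-complementary cross-cases yield distance $2^n/2$, which is immediate from the Reed-Muller identity combined with the invariance of Hamming distance under simultaneous complementation.
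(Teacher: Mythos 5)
Your proof is correct and follows essentially the same route as the paper's: decompose the Hamming distance over the three color blocks, use that complementary code-words are at distance $2^n$ while all other pairs in $H_n \cup \bar{H}_n$ are at distance $2^n/2$, and sum the block contributions. Your explicit verification of the three non-complementary cases is just a more detailed spelling-out of the same argument.
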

\begin{proof}
    If $e = \{u,v\} \in E(G)$, there is a $k$ and an $i$ such that $e$ is the $k$th edge of color $i$, and therefore all of the $i$th $2^n$ entries of $u$ and $v$ disagree.
    For the two remaining colors, there is no edge between $u$ and $v$, and therefore each of the two remaining pairs of vectors disagree in exactly $2^n/2$ entries:
    All code-words are from $H_n$ or $\bar{H}_n$, and none of the two pairs has complementary code-words. 
    If $\{u,v\} \notin E(G)$, then the Hamming distance between each pair of vectors is exactly $2^n/2$ and thus the overall distance is $3 \cdot 2^n/2$. \qedd
\end{proof}

The lemma implies that it is $\NP$-hard to obtain a $(3/4+\epsilon)$-approximation for MSTSP in $3 \cdot 2^n$-dimensional Hamming metrics.
Suppose otherwise that a polynomial-time algorithm $\mathcal{A}$ can compute such an approximation.
Then, given a hard instance $G$, and the vertex-labeling described above, we 
conclude that $G$ is Hamiltonian if and only if it admits a tour with scatter $2^{n+1}$ (w.r.t.\ the Hamming-distances between vertex-labels). 
If there is no such tour, then the optimal MSTSP solution has value $3\cdot 2^n/2 < (3/4 + \epsilon) \cdot 2^{n+1}$.
We can therefore use $\mathcal{A}$ to decide the Hamiltonicity of $G$ in polynomial time.\\ 

We still have the problem that the dimension $3 \cdot 2^n$ is too large. 
To obtain a suitable embedding, we have to accept a small error.
We say that an embedding $f\colon V \times V \rightarrow \{0,1\}^k$ is $(k,D_1,D_2,\gamma)$-good if for $u,v \in V$,
\begin{compactenum}
    \item if $\{u,v\}$ is of cost $1$, then $D_1-\gamma \le d_H(f(u),f(v)) \le D_1 + \gamma$; and
    \item if $\{u,v\}$ is of cost $2$, then $D_2-\gamma \le d_H(f(u),f(v)) \le D_2 + \gamma$.
\end{compactenum}
\begin{lemma}[Naor and Naor~\cite{NN90}, Trevisan~\cite{Trevisan}]
    For every $\epsilon > 0$ and positive integer $n$ there is a collection
    $C_{n,\epsilon} \subseteq \{0, 1\}^{k(n,\epsilon)}$ such that $|C_{n,\epsilon}| = n$, $k(n,\epsilon) = O((\log n)/\text{poly}(1/\epsilon))$, and for each pair of
two elements $u, v \in C_{n,\epsilon}$ we have $k(n,\epsilon)(1/2 - \epsilon) \le d_H(u, v) \le k(n,\epsilon)(1/2 +\epsilon)$.
Furthermore, there is a procedure that computes $C_{n,\epsilon}$ in $\text{poly}(n, 1/\epsilon)$
time.
\end{lemma}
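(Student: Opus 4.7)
The plan is to derive the lemma from the standard construction of almost-equidistant binary codes via $\epsilon$-biased sample spaces. Recall that a distribution $D$ on $\{0,1\}^m$ has bias at most $\beta$ if for every nonzero $\alpha \in \{0,1\}^m$, $|\mathbb{E}_{x \sim D}[(-1)^{\langle \alpha, x \rangle}]| \le \beta$. I would take as a black box the Naor--Naor construction of a $\beta$-biased distribution whose support $\{v_1,\dots,v_k\} \subseteq \{0,1\}^m$ has size $k = \mathrm{poly}(m, 1/\beta)$ and is computable in $\mathrm{poly}(m, 1/\beta)$ time.

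First, I would set $m = \lceil \log_2 n \rceil$ and $\beta = 2\epsilon$, so that the Naor--Naor construction yields $k = \mathrm{poly}(\log n, 1/\epsilon)$, matching the bound on $k(n,\epsilon)$ claimed in the lemma. Next, I would form the $k \times m$ binary matrix $M$ whose rows are $v_1, \dots, v_k$, encode each $i \in [n]$ as its binary representation $b_i \in \{0,1\}^m$, and define the $i$th codeword $c_i = M b_i \in \{0,1\}^k$. Then $C_{n,\epsilon} := \{c_1,\dots,c_n\}$ has size $n$, and the whole procedure runs in $\mathrm{poly}(n,1/\epsilon)$ time.

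The distance property is a short calculation. For distinct $i,j$ set $w = b_i \oplus b_j \neq 0$; then $c_i \oplus c_j = Mw$, and its Hamming weight equals $|\{\ell : \langle v_\ell, w \rangle = 1\}|$. The $\beta$-bias condition, rewritten as $|\#\{\langle v_\ell, w\rangle = 0\} - \#\{\langle v_\ell, w\rangle = 1\}| \le \beta k$, immediately gives $d_H(c_i, c_j) = |Mw| \in \bigl[(1-\beta)k/2,\ (1+\beta)k/2\bigr] = \bigl[(1/2-\epsilon)k,\ (1/2+\epsilon)k\bigr]$, which is exactly the required range.

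The main obstacle is the construction of near-optimal $\epsilon$-biased spaces itself, which is the genuine content of the Naor--Naor result and requires nontrivial tools such as concatenation of Reed--Solomon with Hadamard codes, or an expander-based dual-BCH analysis. Granting that as a black box, everything else is routine linear algebra over $\mathbb{F}_2$, and the running-time and cardinality bounds follow by substitution. A minor bookkeeping issue is ensuring the $c_i$'s are distinct; since the $b_i$'s already are, distinctness of the $c_i$'s is immediate from the nontrivial lower bound on $|Mw|$ established above.
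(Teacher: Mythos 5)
This lemma is not proved in the paper at all; it is imported verbatim from Naor--Naor and Trevisan, so there is no internal argument to compare against. Your derivation is the standard one behind that citation: uniform sampling over an $\epsilon$-biased (small-bias) set, codeword $c_i = M b_i$ over $\mathbb{F}_2$, and the bias condition $|\#\{\langle v_\ell,w\rangle =0\}-\#\{\langle v_\ell,w\rangle =1\}|\le \beta k$ translating, with $\beta=2\epsilon$, into $d_H(c_i,c_j)\in[(1/2-\epsilon)k,(1/2+\epsilon)k]$; that calculation, the distinctness remark, and the running-time bookkeeping are all correct. The caveat is that the substantive content of the lemma -- the explicit construction of near-optimal small-bias spaces -- is exactly what you take as a black box, so your proposal is best read as a correct reduction of the stated lemma to the Naor--Naor construction rather than a self-contained proof; this is, however, no less than what the paper itself does by citing \cite{NN90,Trevisan}. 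One quantitative point deserves care: you quote the black box as giving support size $\mathrm{poly}(m,1/\beta)$, which with $m=\lceil\log_2 n\rceil$ only yields $k=\mathrm{poly}(\log n,1/\epsilon)$; to match the intended bound (the paper's ``$O((\log n)/\mathrm{poly}(1/\epsilon))$'' is surely meant as $O(\log n\cdot\mathrm{poly}(1/\epsilon))$, i.e.\ \emph{linear} in $\log n$), and hence the dimension $d\ge c\log n$ in Theorem~\ref{thm:hardness}, you should invoke the sharper Naor--Naor bound $k=O(m\cdot\mathrm{poly}(1/\beta))$ (or the AGHP bound $O((m/\beta)^2)$, which costs only a $\log n$ factor in dimension). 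With that adjustment the argument goes through as the literature intends.
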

Observe that inverting a code-word only changes the sign of the error. 
We can therefore apply the previous results with respect to the smaller code-words and obtain the following.
\begin{lemma}
    Let $G$ be a complete $n$-vertex graph with edge costs in $\{1,2\}$ such that the induced subgraph of the cost-two edges is bipartite and $3$-regular.
    There is a polynomial-time algorithm that, for an arbitrary $\gamma > 0$, finds a $(k,3D/2,2D,\gamma)$-good embedding of $V(G)$, where $k= O\left(\log (3 n)/\gamma^2\right)$ and $D = k/2$.
\end{lemma}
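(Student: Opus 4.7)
The plan is to reuse the three-block embedding from the exact case verbatim, only replacing the Reed--Muller building block $H_n$ by the almost-balanced code provided by the preceding Naor--Naor lemma, with its accuracy parameter $\epsilon'$ chosen so that the accumulated error across blocks stays below $\gamma$.

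First, I would fix a proper $3$-edge-coloring of the cost-two subgraph; this exists and is efficiently computable because that subgraph is bipartite and $3$-regular, as already used in the exact construction. I would then invoke the Naor--Naor lemma to obtain $\Theta(n)$ codewords in $\{0,1\}^{k_0}$ with every pairwise Hamming distance in $[k_0(1/2-\epsilon'),\, k_0(1/2+\epsilon')]$, where $k_0 = O(\log n / \epsilon'^2)$. Using exactly the earlier assignment---for the $j$th edge of color $i$, one endpoint receives the codeword $a_j$ and the other its complement $\bar{a}_j$ in the $i$th block of length $k_0$---this yields an embedding of $V(G)$ into $\{0,1\}^{3k_0}$.

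The distance analysis then mirrors the exact proof. The observation made immediately before the lemma---that inverting a codeword only changes the sign of the error---is the key ingredient: it ensures that $d_H(a_j, a_{j'})$ and $d_H(a_j, \bar{a}_{j'})$ both lie in $[k_0(1/2-\epsilon'),\, k_0(1/2+\epsilon')]$ whenever $a_j \neq a_{j'}$, while $d_H(a_j, \bar{a}_j) = k_0$ holds exactly. Block-by-block, a cost-two edge then contributes exactly $k_0$ in the block of its color and an almost-balanced value in each of the other two blocks, summing to $2k_0 \pm 2\epsilon' k_0$; a cost-one pair contributes an almost-balanced value in every block, summing to $3k_0/2 \pm 3\epsilon' k_0$.

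Finally, I would set $\epsilon' = \Theta(\gamma)$ small enough that $3\epsilon' k_0 \le \gamma$, and check that this still yields $k = 3k_0 = O(\log(3n)/\gamma^2)$, matching the stated bound once $D$ is identified with the natural per-block length scale (so that $3D/2$ and $2D$ correspond to $3k_0/2$ and $2k_0$). The step I expect to require the most care is propagating the error uniformly from the Naor--Naor guarantee to both the original codewords and their complements, since the paper explicitly uses the sign-flip observation to reduce the complement case to the original one; once that reduction is cleanly in place, the matching of the final numerical parameters and the polynomial running time follow by direct computation from the Naor--Naor lemma.
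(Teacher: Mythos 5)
Your proposal is correct and follows essentially the same route as the paper's (implicit) proof: the paper likewise reuses the three-block, edge-coloring construction from the exact Reed--Muller case, substitutes the Naor--Naor almost-balanced codewords and their complements, and invokes the sign-flip observation so that each off-color block contributes $k_0/2 \pm \epsilon' k_0$ while the block of the edge's own color contributes exactly $k_0$, giving totals $3k_0/2 \pm 3\epsilon' k_0$ versus $2k_0 \pm 2\epsilon' k_0$. Your reading of $D$ as the per-block length (so that $3D/2$ and $2D$ are the two target distances) and of the error as scaling with $k_0$ (i.e., $\epsilon' = \Theta(\gamma)$ yielding additive error $\Theta(\gamma k_0)$, rather than the literally absolute bound $3\epsilon' k_0 \le \gamma$, which would be unattainable) is the intended interpretation of the lemma's parameters, matching the stated $k = O(\log(3n)/\gamma^2)$ and the $3/4$ gap used afterwards.
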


\begin{theorem}\label{thm:hardness}
    There is a constant $c$ such that
    it is $\NP$-hard to obtain a $(3/4)^{1/p}$-approximate solution to MSTSP in $\mathbb{R}^d$ for $d \ge c\log n$ and distances according to the $\ell_p$ norm.
\end{theorem}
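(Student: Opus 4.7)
The plan is to reduce Hamiltonicity of bipartite $3$-regular graphs, which is $\NP$-hard by Akiyama, Nishizeki, and Saito~\cite{ANS80}, to approximate MSTSP in $\mathbb{R}^d$ with $d = \Theta(\log n)$. Given a bipartite $3$-regular graph $G$ on $n$ vertices, I would form the complete graph on $V(G)$ with cost $2$ on edges of $G$ and cost $1$ on non-edges. The cost-two subgraph is precisely $G$, so the hypothesis of the preceding lemma is satisfied.

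Next I would apply the preceding lemma with parameter $\gamma$ (to be fixed later as a small constant depending only on $p$ and the target ratio) to obtain a $(k,3D/2,2D,\gamma)$-good embedding $f\colon V(G) \to \{0,1\}^k$ with $k = O(\log n/\gamma^2)$ and $D = k/2$. I would then regard $f(V(G))$ as a subset of $\mathbb{R}^k$ via the obvious inclusion $\{0,1\}^k \subset \mathbb{R}^k$; under the $\ell_p$ norm the distance between two such $0/1$-vectors is exactly $d_H(\cdot,\cdot)^{1/p}$. Hence, writing $\|\cdot\|_p$ for the $\ell_p$ distance,
\begin{itemize}
\item if $\{u,v\} \in E(G)$ (cost $2$), then $\|f(u)-f(v)\|_p \in \bigl[(2D-\gamma)^{1/p},(2D+\gamma)^{1/p}\bigr]$;
\item if $\{u,v\} \notin E(G)$ (cost $1$), then $\|f(u)-f(v)\|_p \in \bigl[(3D/2-\gamma)^{1/p},(3D/2+\gamma)^{1/p}\bigr]$.
\end{itemize}

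The key case analysis then is straightforward. If $G$ is Hamiltonian, then there is a tour of $f(V(G))$ using only edges of $G$, so the MSTSP optimum is at least $(2D-\gamma)^{1/p}$. Conversely, if $G$ is not Hamiltonian, then every tour of $f(V(G))$ must contain at least one edge corresponding to a non-edge of $G$, so the MSTSP optimum is at most $(3D/2+\gamma)^{1/p}$. The ratio between the two bounds is $\bigl((3/2+\gamma/D)/(2-\gamma/D)\bigr)^{1/p}$, which tends to $(3/4)^{1/p}$ as $\gamma/D \to 0$. Therefore an approximation ratio strictly exceeding $(3/4)^{1/p}$ on such instances would decide Hamiltonicity in polynomial time, implying the claimed $\NP$-hardness; the constant $c$ in $d \ge c\log n$ is obtained by fixing $\gamma$ small enough to close the gap and taking $c$ of order $1/\gamma^2$ as in the lemma.

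The main obstacle is really just bookkeeping around the error parameter $\gamma$: one must verify that the additive slack $\gamma$ in the embedding, once pushed through the $p$th root, leaves the multiplicative gap strictly above $(3/4)^{1/p}$, and that the resulting choice of $\gamma$ keeps $k$ inside $c\log n$ for a constant $c$. There is no further ingredient beyond combining the Akiyama--Nishizeki--Saito hardness with the Naor--Naor code-based embedding already cited; the structural reason the reduction works is precisely the inversion trick explained in the introduction to this section, which turns Trevisan's $\APX$-hardness of TSP with $\{1,2\}$-costs into a \emph{gap} between $2D$ and $3D/2$ that MSTSP, as a $\max\min$ problem, can exploit directly without needing an approximation-preserving reduction.
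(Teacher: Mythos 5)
Your proposal is correct and follows essentially the same route as the paper: a gap-introducing reduction from Hamiltonicity of bipartite $3$-regular graphs (Akiyama et al.), combined with the preceding $(k,3D/2,2D,\gamma)$-good embedding lemma and the identity $d_p(u,v)=d_H(u,v)^{1/p}$ on $\{0,1\}^k$, yielding a gap of roughly $(2D)^{1/p}$ versus $(3D/2)^{1/p}$ in dimension $O(\log n/\gamma^2)$. Your handling of the slack $\gamma$ (hardness for any ratio strictly above $(3/4)^{1/p}$, with $c$ of order $1/\gamma^2$) matches the paper's argument, including the same mild looseness about the exact constant in the theorem statement.
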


Theorem~\ref{thm:hardness} follows from the previous discussion and using that for $u,v \in \{0,1\}^n$, $d_p(u,v) = d_H(u,v)^{1/p}$, where $d_p$ is the distance according to $\ell_p$-norm. 
Note that $\{0,1\}^n \subseteq \mathbb{R}^n$ and we can interpret $u$ and $v$ as vectors in the Euclidean space. \\

For large $d$, the running time of algorithm~B is dominated by the term $2^{K \log K}$, where $K = (13/\epsilon)^d$.
The $\APX$-hardness in the $c\log n$-dimensional Euclidean space shows that, except for minor terms, according to our current knowledge, this running time cannot be improved with respect to $d$:
An approximation scheme with running time $2^{2^{o_\epsilon(d)}}$ would imply that we can solve the Hamiltonian cycle problem in bipartite $3$-regular graphs in time $2^{o(n)}$.  
We observe that the reduction from $3$-SAT to Hamiltonian cycle in the target class of graphs increases the parameters only linearly, therefore it preserves subexponential time. (We refer to the hardness proof in~\cite{ANS80}, and the additional details in~\cite{GJT}. We remark, that, since we do not require planarity, the step of removing crossings is not required.) Hence, the existence of such an approximation scheme would contradict the exponential time hypothesis (ETH)~\cite{ipz} and Theorem~\ref{thm:eth} follows.

We leave open the question of whether the dependence on $\epsilon$ can be improved, and whether there exist more efficient approximation schemes specifically tailored to the low-dimensional case.

\paraspace{Acknowledgment.} We would like to thank Karteek Sreenivasaiah for several helpful discussions.

\bigskip

\bibliographystyle{plain}
\bibliography{MSTSP}

\end{document}